\newtheorem{theorem}{Theorem}
\newtheorem{definition}{Definition}
\newtheorem{proposition}{Proposition}
\newtheorem{remark}{Remark}
\newenvironment{proof}[1][Proof]{\noindent\textbf{#1.} }{\ \rule{0.5em}{0.5em}}
\begin{document}

\title{An axiomatic derivation of Condorcet-consistent social decision rules}
\author{Aurelien YONTA MEKUKO, Issofa MOYOUWOU, \and Matias N\'{U}\~{N}EZ,
Nicolas Gabriel ANDJIGA \and \thanks{%
Aurelien YONTA MEKUKO (\textbf{corresponding author}), Higher Teacher
Training College of Yaounde, P.O. 47, Yaounde, Cameroon, University of
Yaounde I. Email: aurelienmekuko@ens.cm}\thanks{%
Issofa MOYOUWOU, Higher Teacher Trainning College of Yaounde, P.O. 47
Yaounde, Cameroon, University of Yaounde I. Email: issofa.moyouwou@ens.cm}%
\thanks{%
Matias N\'{U}\~{N}EZ, Laboratoire LAMSADE, Universit\'{e} Paris Dauphine,
75775 Paris cedex 16 France. Email: matias.numez@dauphine.fr}\thanks{%
Nicolas Gabriel ANDJIGA, Higher Teacher Training College of Yaounde, P.O. 47
Yaounde, Cameroon, Unviversity of Yaounde I. Email: andjiga2002@yahoo.fr}}
\date{}
\maketitle

\begin{abstract}
A social decision rule (SDR) is any non empty set-valued map that associates
any profile of individual preferences with the set of (winning)
alternatives. An SDR is Condorcet-consistent if it selects the set of
Condorcet winners whenever this later is non empty. We propose a
characterization of Condorcet consistent SDRs with a set of minimal axioms.
It appears that all these rules satisfy a weaker Condorcet principle - the
\textit{top consistency} - which is not explicitly based on majority
comparisons while all scoring rules fail to meet it. We also propose an
alternative characterization of this class of rules using Maskin
monotonicity.
\end{abstract}

 Social decision rule - Condorcet-consistency - Top consistency - Maskin monotonicity

\section{Introduction}

The axiomatic literature on social decision rules (SDRs) has emphasized on
two main families that stand out due to their practical and appealing
properties: scoring SDRs and Condorcet-consistent SDRs. On the one hand,
scoring SDRs are the rules in which each voter submits a ballot that assigns
some number of points to each alternative, and the winners are the
alternatives with the maximum total number of points. As set-valued
functions, plurality rule, the Borda rule and approval voting rule belong to
this family. These rules are widely investigated and several axiomatizations
(with different degrees of generality) have been provided; seen \cite%
{Young1975} and \cite{Smith73}; or \cite{Myerson1995}, \cite%
{van2006characteristic}, \cite{Pivato2013} and \cite{andjiga2014metric} for
further analysis.

On the other hand, no axiomatization result covers the whole class of
Condorcet-consistent SDRs on which we focus in this paper. Under a
Condorcet-consistent SDR, voters submit each a ranking of the alternatives.
The outcome is then the set of Condorcet winners (CW)\footnote{%
Given a profile of individual preferences, a Condorcet winner (CW) is any
alternative who wins or ties all pairwise majority vote comparisons with any
other alternatives.} whenever the later is non empty. It can be undoubtedly
argued that the idea of CW is a central concept in voting theory \footnote{%
Note that some works on strategic voting theory have underlined the relation
between the equilibrium winners under Approval voting and the selection of
the Condorcet Winner (see \cite{Las2} and \cite{CourtinNunez2014} among
others); see also \cite{CrepelRieucau2005} for historical aspects or \cite%
{Gehrlein2006} for a comprehensive study of the probability that a CW exists
as well as the ability of various voting rules to fit the Condorcet
principle.}. Numerous studies have demonstrated the existence of interesting
and intuitive Condorcet-consistent SDRs\footnote{%
The literature on these rules is vast. See for a few examples \cite{Copeland}%
, \cite{Slater}, \cite{Schwartz1972}, \cite{Fishburn}, \cite{Dutta1988},
\cite{Schwartz1990} or \cite{Laffond1993} among others.}. Clearly two
distinct Condorcet-consistent SDRs differ just on the profiles which admit
no CW.

Within the framework of metric rationalizability\footnote{%
Distance rationalizability of SDRs requires that selected alternatives
should be the most preferred alternatives in the closest consensus profile,
closest been measure with a metric or a monometric (for instance, see \cite%
{perez2017monometrics}) .}, \cite{Elkind2012} nicely characterize several
Condorcet-consistent SDRs amoung which the Young rule and the Maximin rule.
\cite{Henriet1985} provides an axiomatic characterization for the Copeland
choice rule\footnote{%
A choice rule $f$ is defined as a mapping which associates to each set of
alternatives $A$ and each binary relation $R$ on $A$ a choice function $%
f(A,R,.)$. The choice function $f(A,R,.)$ associates to each nonempty subset
$B$ of $A$ the nonempty subset $f(A,R,B)$ of $B$, the set of winners when
the set of competing alternatives is $B$.}. Our objective is to provide some
characteristic features of the class of all Condorcet-consistent SDRs in
terms of a set of minimal axioms. By so doing, our approach contrasts with
previous normative works. To achieve this, we bear our attention on\ $2$%
-profiles which are profiles such that there are two alternatives
unanimously ranked above others. We present some axioms of coherence for
these profiles. Some of our axioms are simply restricted versions of some
usual axioms used to characterize the majority rule with two alternatives
(see \cite{may1952set}) or more (see \cite{CampbellKElly2000}, \cite%
{AsanSanver2002} or \cite{Woeginger} among others). For instance, we
introduce the \emph{top anonimity} (TA) and the \emph{top neutrality} (TN)
axioms respectively as the mild requirements that no permutation of voters
in a $2$-profile affects the winning set and any permutation of alternatives
in a $2$-profile emerges to permuting the winning set accordingly.
Similarly, to state our \emph{top monotonicity} (TM) axiom, consider three
alternatives $x,y$ and $z$; and two profiles $R$ and $Q$ such that $Q$ is
obtained from $R$ when some voter moves $x$ above $z$ while no voter moves
down $x$ nowhere. Then (TM) states that, if $x$ is selected when $x$ and $y$
are moved to the top of each voter's preference in $R$, then when $x$ and $y$
are moved to the top of each voter's preference in $Q$, $x$ is still
selected but not $z$. Roughly, TM guarantees that from a profile to a $2$%
-profile, an improvement of the ranking of an alternative is never harmful;
and the deterioration of the ranking of an alternative is.

We introduce some new axioms. \emph{top rationality} (TR) axiom states that
in a $2$-profile, at least one of the two unanimously top-ranked
alternatives should belong to the winning set. The two other newly
introduced axioms - \emph{weak top consistency}\textit{\ (}WTC\textit{)} and
\emph{top consistency}\textit{\ }(TC) are weaker versions of the Condorcet
principle. The $WTC$ axiom can be stated as follows: given any profile $R$
and any pair $\left\{ x,y\right\} $ of alternatives, the top-shift profile $%
R^{\{x,y\}}$ is the $2$-profile obtained by \textquotedblleft\ moving"
alternatives $x$ and $y$ at the top of voter preferences without any change
in their relative rankings. Then Weak top consistency requires that for a
given profile, whenever there exists an alternative that is selected each
time it is top-shifted with any other alternative, then this alternative is
selected. This condition can be viewed as some sort of Independence of Least
Preferred Alternatives. Indeed, if, given a profile $R$, whenever we
top-shift one alternative $x$ with any other alternative $y$, $x$ is the
winning set, it seems intuitive that $x$ should be in the winning set of $R$%
. The $TC$ condition requires that the winning set of the SDR consists of
all alternatives that are always selected each time they are top-shifted
with any other alternative.

It turns out that top consistency is the new frontier of
Condorcet-consistent SDRs that excludes all scoring SDRs. It is shown that
all Condorcet-consistent SDRs satisfy the top consistency axiom while all
scoring SDRs fail to meet it. Our characterization hence states that an SDR
is a Condorcet-consistent rule if and only it satisfies the previously
described axioms $TA$, $TN$, $TM$, $TR$ and $TC$. Futhermore, we prove that
the set of axioms is minimal, in the sense that, by omitting any single
axiom, there exists an SDR that satisfies all the other axioms but is not
Condorcet-consitent.

Finally and in order to shed some light on the role of the top consistency
axiom, we focus on the profiles which always admit a Condorcet Winner. In
this restricted domain, we prove that Condorcet-consistent SDRs satisfy the
wellknown Maskin monotonicity ($MM$) axiom and $WTC$; while in the
unrestricted domain, $MM$ fails to be satisfied. The condition of $MM$ is
known to be quite demanding as illustrated by the literature in Nash
implementation (see \cite{maskin1999nash} or \cite{mckelvey1989game}). In
this restricted domain, we prove that $WTC$ and $MM$ is equivalent to $TC$,
underlining the logic behind the top consistency condition. This allows us
to derive another axiomatic characterization of Condorcet-consistent rules
involving Maskin monotonicity.

The paper is organized as follows. In Section 2, we introduce basic
notations and definitions and formally describe our axioms. Some particular
highlights on those axioms are provided in Section 3 followed by the main
result which is a characterization of Condorcet-consistent SDRs. It is also
shown that our axioms are minimal. Section 4 is devoted to an alternative
characterization on Condorcet domain with the help of Maskin monotonicity.

\section{Notations and definitions}

Let $N=\left\{ 1,2,...,n\right\} $ denote a finite set of $n$ voters with $%
n\geq 2$ and $A$ a finite set of $m$ alternatives with $m\geq 3$. Voter
preference relations are defined over $A$ and are assumed to be weak orders
(complete and transitive binary relations on $A$). The set of all weak
orders on $A$ is denoted $W$. A (preference) profile is an $n-$tuple $%
R=(R_{1},R_{2},...,R_{n})$ of weak orders where the $i^{th}$ component $%
R_{i} $ of $R$ stands for voter $i$'s preference relation. The set of all
possible profiles is denoted $W^{N}$. Given $R\in W^{N}$ and $i\in N$:

\begin{itemize}
\item for any nonempty subset $B$ of $A$, $R_{i}|_{B}$ is the restriction of
$R_{i}$ on $B$;

\item for any partition $\left\{ A_{1},A_{2}\right\} $ of $A$, we write $%
Q_{i}=R_{i}|_{A_{1}}R_{i}|_{A_{2}}$ if voter $i$ strictly prefers each
alternative in $A_{1}$ to each alternative in $A_{2}$, alternatives in $A_{1}
$ are ranked according to $R_{i}|_{A_{1}}$ while alternatives in $A_{2}$ are
ranked according to $R_{i}|_{A_{2}}$;

\item $\succ _{R_{i}}$ and $\sim _{R_{i}}$ are respectively the asymmetric
component and the symmetric component of $R_{i}$;

\item For any pair $\{x,y\}\subseteq A$,

\begin{itemize}
\item $n\left( x,y,R\right) =\#\left\{ i\in N:x\succ _{R_{i}}y\right\} $. In
other words, $n\left( x,y,R\right) $ stands for the number of voters who
strictly prefer $x$ to $y$ in the profile $R$.

\item $x\succcurlyeq _{R_{i}}y$ holds if $x\succ _{R_{i}}y$ or $x\sim
_{R_{i}}y$;

\item $R_{i}^{\left\{ x,y\right\} }=R_{i}|_{\left\{ x,y\right\}
}R_{i}|_{A\backslash \left\{ x,y\right\} }$ stands for the weak order
obtained from $R_{i}$ by only moving to the top $x$ and $y$ without changing
their relative ranking; and $R^{\left\{ x,y\right\} }$ is the $2$-profile
obtained from $R$ by substituting $R_{i}^{\left\{ x,y\right\} }$ to $R_{i}$
for each $i\in N$; $R^{\left\{ x,y\right\} }$ is also called the top-shift
profile of $x$ and $y$ from $R$.

\item We simply write $R_{i}|_{\left\{ x,y\right\} }=xy$ if $x\succ
_{R_{i}}y $ and $R_{i}|_{\left\{ x,y\right\} }=\left( xy\right) $ if $x\sim
_{R_{i}}y$. For example, $xyR_{i}|_{A\backslash \left\{ x,y\right\} }$
stands for the weak order in which $x$ is first, $y$ is second and
alternatives other than $x$ and $y$ are ranked lower than $y$ and according
to $R_{i}$.
\end{itemize}
\end{itemize}

A social decision rule (SDR) is a mapping $C$ from $W^{N}$ to $%
2^{A}\backslash \{\emptyset \}$, the set of nonempty subsets of $A$. We now
introduce two known classes of SDRs: Condorcet-consistent SDRs and $L$%
-scoring SDRs.

\begin{definition}
\begin{enumerate}
\item For any $R\in W^{N}$ and any pair $\{x,y\}\subseteq A$, we say that $x$
beats $y$ in a pairwise majority vote, denoted $xMy$, if $n\left(
x,y,R\right) >n\left( y,x,R\right) $.

Moreover, $x$ is a Condorcet Winner if $n\left( x,y,R\right) \geq n\left(
y,x,R\right) ,\forall y\neq x.$

The set of all Condorcet winners (possibly empty) in $R$ is denoted by $%
CW\left( R\right) $.

\item An SDR $C$ is Condorcet-consistent if for all $R\in W^{N},$ $C\left(
R\right) =CW\left( R\right) $ whenever $CW\left( R\right) \neq \emptyset .$
\end{enumerate}
\end{definition}

Denote by $L$ the set of all linear orders (or strict orders) on $A$ and by $%
L^{N\text{ }}$ the set of all profiles of linear orders. The rank of an
alternative $x$ with respect to a given linear order $l$\ denoted by $%
rg\left( x,l\right) $ is the total number of alternatives $y$ such that $%
y\succcurlyeq _{l}x$ and a scoring vector is any $m$-tuple $\alpha =\left(
\alpha _{1},\alpha _{2},...,\alpha _{m}\right) $ of real numbers such that $%
\alpha _{1}\geq \alpha _{2}\geq ...\geq \alpha _{m}$ with $\alpha
_{1}>\alpha _{m}$. Given $R\in L^{N}$, a scoring vector $\alpha $ and an
alternative $x$, we define the score of $x$ in $R$ as $S_{\alpha }\left(
x,R\right) =$ $\sum\limits_{i\in N}\alpha _{rg\left( x,R_{i}\right) }$. We
denote by $C_{\alpha }\left( R\right) $ the subset of $A$ defined as follow:
\begin{equation*}
C_{\alpha }\left( R\right) =\left\{ x\in A:S_{\alpha }\left( x,R\right) \geq
S_{\alpha }\left( y,R\right) ,\forall y\neq x\right\} .
\end{equation*}

\begin{definition}
An SDR is an $L$-scoring SDR if there exists a scoring vector $\alpha $ such
that $C\left( R\right) =C_{\alpha }\left( R\right) $ for all $R\in L^{N}$.
\end{definition}

Note that for a scoring vector $\alpha $, the mapping $C_{\alpha }$, that
associates each profile $R$ of linear orders with the subset $C_{\alpha
}\left( R\right) $ of $A$, is a scoring SDR on $L^{N}$. Therefore any $L$%
-scoring SDR can be viewed as an extension of a scoring SDR from $L^{N}$ to $%
W^{N}$.

\begin{definition}
Given an SDR $C$, for any preference profile $R\in W^{N}$, the nice set of $%
R $, denoted by $\mathcal{N}_{C}(R)$, is defined as follows:
\begin{equation*}
\mathcal{N}_{C}(R)=\{x\in A:x\in C(R^{\left\{ x,y\right\} }),\forall y\in
A\setminus \{x\}\}.
\end{equation*}
\end{definition}

Given an SDR, the nice set of a given profile is the collection of all
alternatives that are always winning each time they are top-shifted with any
other alternative.

\begin{definition}
An SDR $C$ satisfies weak top consistency (WTC) if for any $R\in W^{N}$, $%
C(R)\supseteq \mathcal{N}_{C}(R)$.
\end{definition}

According to weak top consistency, any alternative that belongs to the nice
set for a given profile is selected by a given SDR.

\begin{definition}
An SDR $C$ satisfies top consistency (TC) if for any $R\in W^{N}$ with $%
\mathcal{N}_{C}(R)\neq \emptyset $, $C(R)=\mathcal{N}_{C}(R)$.
\end{definition}

Top consistency requires that given an SDR, the winning set is exactly the
nice set whenever it is nonempty. It is obvious that each SDR that satisfies
TC also satisfies WTC.

\medskip To introduce the next two definitions, we need further notations.
We denote by $S_{N}$ (respectively $S_{A}$) the set of all permutations of $%
N $ (respectively $A$). Given $R\in W^{N}$, $i\in N$, $\pi \in S_{N}$ and $%
\sigma \in S_{A}$: (i) $R_{\pi }=\left( R_{\pi \left( 1\right) },R_{\pi
\left( 2\right) },...,R_{\pi \left( n\right) }\right) $ is the profile
obtained from $R$ by permuting voter preference relations with respect to $%
\pi $ in such a way that voter $i$ is now affected voter $j$'s preference
relation with $j=\pi \left( i\right) $; (ii) $\sigma \left( R\right) =\left(
\sigma \left( R^{1}\right) ,\sigma \left( R^{2}\right) ,...,\sigma \left(
R^{n}\right) \right) $ is the profile obtained from $R$ after relabeling
alternatives according to $\sigma $, that is for all $a,b\in A$ and for all $%
i\in N,$ $a\succ _{R_{i}}b\Longleftrightarrow \sigma \left( a\right) \succ
_{\sigma \left( R_{i}\right) }\sigma \left( b\right) $; (iii) given a non
empty subset $B$ of $A$, $\sigma \left( B\right) =\left\{ \sigma \left(
b\right) :b\in B\right\} $.\medskip

A $2$-profile is a profile in which there exist two alternatives ranked
above any other alternatives. Let $W_{2}^{N}$ denote the set of all $2$%
-profiles. That is:
\begin{equation*}
R\in W_{2}^{N}\Longleftrightarrow \exists \left\{ x,y\right\} \subseteq
A:\forall z\in A\backslash \left\{ x,y\right\} ,\forall i\in N,x\succ
_{R_{i}}z\text{ and }y\succ _{R_{i}}z\text{.}
\end{equation*}

\begin{definition}
Given an SDR $C$,

\begin{enumerate}
\item $C$ satisfies top neutrality (TN) if $\forall R\in W_{2}^{N},\forall
\sigma \in S_{A},C\left( \sigma \left( R\right) \right) =\sigma \left(
C\left( R\right) \right) .$

\item $C$ satisfies top anonymity (TA) if $\forall R\in W_{2}^{N},\forall
\pi \in S_{N}:C\left( R_{\pi }\right) =C\left( R\right) .$

\item $C$ is top symmetric (TS) if $C$ is both TN and TA.
\end{enumerate}
\end{definition}

Note that top neutrality and top anonymity are respectively the restrictions
of the well-known neutrality axiom and anonymity axiom from $W^{N}$ to $%
W_{2}^{N}$. Top symmetric then amounts to saying that both names of
candidates and names of voters should not play any role in determining
winning alternatives over $W_{2}^{N}$.\medskip

Monotonicity properties are interprofile criteria stipulating that from a
profile to another, additional support is never harmful for an alternative.
To state the next axiom that can be viewed as a monotonicity property
between profiles in $W_{2}^{N}$, we use the following notations to precise
what should be considered as an additional support. Given $R,Q\in W^{N}$, we
write $R\vartriangleright ^{x,y}Q$ if (i) $\forall i\in N,$ $\forall z\in A,$
$x\succ _{R_{i}}z\Rightarrow x\succ _{Q_{i}}z$\ and $x\sim
_{R_{i}}z\Rightarrow x\succcurlyeq _{Q_{i}}z$, (the rank of $x$ in voter
preferences never decreases from $R$ to $Q$); and (ii) $y\succcurlyeq
_{R_{i}}x$ and $x\succ _{Q_{i}}y$ for some $i\in N$. When $%
R\vartriangleright ^{x,y}Q$ holds, we say that $Q$ is an additional support
of $x$ against $y$ from $R$ to $Q$.

\begin{definition}
An SDR $C$ satisfies top monotonicity (TM) if $\forall R,Q\in W^{N},\forall
\left\{ x,y\right\} \subseteq A,\forall z\in A\backslash \left\{ x\right\} :$%
\begin{equation*}
\left( x\in C\left( R^{\left\{ x,y\right\} }\right) \text{ and }%
R\vartriangleright ^{x,z}Q\right) \Rightarrow x\in C\left( Q^{\left\{
x,y\right\} }\right) \text{ and }z\not\in C\left( Q^{\left\{ x,y\right\}
}\right) .
\end{equation*}
\end{definition}

Assume that $x$ is selected in a profile $R$ when top-shifted with another
alternative $y$. Then $TM$ requires that any additional support of $x$
against an alternative $z$ (possibly $y$) from $R$ to a new profile $Q$
results in $Q^{\left\{ x,y\right\} }$ in dismissing $z$ from the winning set
meanwhile $x$ is still winning.

\begin{definition}
An SDR $C$ satisfies top rationality (TR) if

$\forall R\in W^{N},\forall \left\{ x,y\right\} \subseteq A:C\left(
R^{\left\{ x,y\right\} }\right) \cap \left\{ x,y\right\} \neq \emptyset $.
\end{definition}

Top rationality is a very weak requirement: whenever two alternatives are
top-shifted in a profile, at least one of them is selected.

\section{Highlights on axioms and characterization}

We provide here a complete characterization of Condorcet-consistent SDRs.
But before, we present some results which highlight some properties of the
axioms we use.

\subsection{Top consistency}

In this section, we show that TC constitutes a border line between
Condorcet-consistent SDRs and $L$-scoring SDRs. More precisely, it is shown
that all Condorcet-consistent SDRs satisfy TC while all $L$-scoring SDRs
fail to meet it.

\begin{proposition}
Any Condorcet-consistent SDR satisfies TC.
\end{proposition}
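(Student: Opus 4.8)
Let $C$ be a Condorcet-consistent SDR. The plan is to show that for any profile $R$ with $\mathcal{N}_C(R)\neq\emptyset$, we have $C(R)=\mathcal{N}_C(R)$. The proof splits naturally into two inclusions, and the key leverage is that in a top-shift profile $R^{\{x,y\}}$ only $x$ and $y$ can be Condorcet winners (since every other alternative loses to both $x$ and $y$ by unanimity), so Condorcet-consistency gives us very tight control over $C(R^{\{x,y\}})$.

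\medskip
\noindent\textbf{Step 1: Characterizing membership in the nice set.} First I would observe that in the $2$-profile $R^{\{x,y\}}$, for any $z\in A\setminus\{x,y\}$ we have $n(x,z,R^{\{x,y\}})=n(y,z,R^{\{x,y\}})=n$, so $z$ is never a Condorcet winner while at least one of $x,y$ always is: indeed $CW(R^{\{x,y\}})\neq\emptyset$ always, being $\{x\}$ if $xMy$, $\{y\}$ if $yMx$, and $\{x,y\}$ if $x$ and $y$ tie. Hence by Condorcet-consistency, $C(R^{\{x,y\}})=CW(R^{\{x,y\}})\subseteq\{x,y\}$, and $x\in C(R^{\{x,y\}})$ if and only if $n(x,y,R)\geq n(y,x,R)$, i.e. $x$ weakly beats $y$ in $R$. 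Consequently, $x\in\mathcal{N}_C(R)$ if and only if $x$ weakly beats every other alternative in $R$, which is precisely the statement that $x\in CW(R)$. In short, $\mathcal{N}_C(R)=CW(R)$ for every profile $R$.

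\medskip
\noindent\textbf{Step 2: Concluding.} Once Step 1 is established the result is immediate. Suppose $\mathcal{N}_C(R)\neq\emptyset$. By Step 1 this means $CW(R)\neq\emptyset$, and then Condorcet-consistency gives $C(R)=CW(R)=\mathcal{N}_C(R)$, which is exactly the conclusion of TC.

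\medskip
I expect the only genuinely substantive point to be the computation in Step 1 that $\mathcal{N}_C(R)$ coincides with $CW(R)$; everything else is bookkeeping. The mild subtlety worth stating carefully is the equivalence $x\in C(R^{\{x,y\}})\iff n(x,y,R)\geq n(y,x,R)$, which relies both on the fact that top-shifting preserves the relative ranking of $x$ and $y$ in every voter's preference (so the pairwise tally between $x$ and $y$ is unchanged from $R$ to $R^{\{x,y\}}$) and on $CW(R^{\{x,y\}})$ being nonempty so that Condorcet-consistency actually pins down $C(R^{\{x,y\}})$ exactly. No other axioms are needed.
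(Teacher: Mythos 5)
Your proof is correct and rests on the same ingredients as the paper's: that $CW(R^{\{x,y\}})$ is always nonempty and contained in $\{x,y\}$, that top-shifting preserves the $x$--$y$ pairwise tally, and then Condorcet-consistency applied to the top-shift profiles and to $R$ itself. Packaging this as the identity $\mathcal{N}_C(R)=CW(R)$ for all profiles is a slightly cleaner organization than the paper's two separate inclusion arguments, but it is essentially the same approach.
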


\begin{proof}
Assume that $C$ is a Condorcet-consistent SDR. Consider $R\in W^{N}$ such
that $\mathcal{N}_{C}\left( R\right) \neq \emptyset $. We prove that $%
C\left( R\right) =\mathcal{N}_{C}\left( R\right) $.\medskip

On the one hand, consider $y\in \mathcal{N}_{C}\left( R\right) $ and let $%
z\in A\backslash \left\{ y\right\} $. Note that, with respect to $R^{\left\{
z,y\right\} }$, both $z$ and $y$ beat any other alternative $t\in
A\backslash \left\{ z,y\right\} $ in a pairwise majority duel. Suppose that $%
y$ is beaten by $z$ in $R^{\left\{ z,y\right\} }$. Then $z$ is the unique
Condorcet winner in $R^{\left\{ z,y\right\} }$; that is $CW\left( R^{\left\{
z,y\right\} }\right) =\left\{ z\right\} $. Since $C$ is
Condorcet-consistent, $C\left( R^{\left\{ z,y\right\} }\right) =\left\{
z\right\} $ and $y\notin C\left( R^{\left\{ z,y\right\} }\right) $. A
contradiction arises since $y\in \mathcal{N}_{C}\left( R\right) $.
Therefore, $y$ is not beaten by $z$ in $R^{\left\{ z,y\right\} }$. By
definition of $R^{\left\{ z,y\right\} }$, $y$ is not beaten by $z$ in $R$.
This is true for all $z\in A\backslash \left\{ y\right\} $. It follows that $%
y\in CW\left( R\right) \neq \emptyset $. Since $C$ is Condorcet-consistent
and $CW\left( R\right) \neq \emptyset $, then $C\left( R\right) =CW\left(
R\right) $. Hence $y\in C\left( R\right) $. This proves that $\mathcal{N}%
_{C}\left( R\right) \subseteq C\left( R\right) $.\medskip

On the other hand, consider $y\in C\left( R\right) $ and let $z\in
A\backslash \left\{ y\right\} $. By assumption, $\mathcal{N}_{C}\left(
R\right) \neq \emptyset $. Choose an alternative $x\in \mathcal{N}_{C}\left(
R\right) $. As we just show, $x\in CW\left( R\right) $. Therefore $CW\left(
R\right) \neq \emptyset $ and $C\left( R\right) =CW\left( R\right) $. This
implies that $y\in CW\left( R\right) $. Moreover, $y\in CW\left( R^{\left\{
z,y\right\} }\right) \neq \emptyset $ by definition of $R^{\left\{
z,y\right\} }$. Since $C$ is Condorcet-consistent and $CW\left( R^{\left\{
z,y\right\} }\right) \neq \emptyset $, it follows that $CW\left( R^{\left\{
z,y\right\} }\right) =C\left( R^{\left\{ z,y\right\} }\right) $. Hence $y\in
C\left( R^{\left\{ z,y\right\} }\right) $. This proves that $y\in C\left(
R^{\left\{ z,y\right\} }\right) $ for all $z\in A\backslash \left\{
y\right\} $. Thus $y\in \mathcal{N}_{C}\left( R\right) $. We conclude that $%
C\left( R\right) \subseteq \mathcal{N}_{C}\left( R\right) $.
\end{proof}

\begin{proposition}
Assume that $m\geq 3$ and $n\geq 2$. If $n\neq 3$ then any $L$-scoring SDR
fails to satisfies TC.
\end{proposition}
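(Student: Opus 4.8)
The plan is to reduce the statement to a combinatorial fact about the scoring rule and then verify it by exhibiting, for each admissible scoring vector, an explicit profile on which top consistency breaks down. Fix an $L$-scoring SDR $C$ with scoring vector $\alpha =(\alpha _{1},\dots ,\alpha _{m})$. Since $R\in L^{N}$ implies $R^{\{x,y\}}\in L^{N}$, on every profile we shall use $C$ coincides with $C_{\alpha }$, so it suffices to produce $R\in L^{N}$ with $\mathcal{N}_{C}(R)\neq \emptyset $ and $C_{\alpha }(R)\neq \mathcal{N}_{C}(R)$. The key preliminary computation is that, for $R\in L^{N}$ and any pair $\{x,y\}$, one has $S_{\alpha }(x,R^{\{x,y\}})=n(x,y,R)\alpha _{1}+n(y,x,R)\alpha _{2}$, the symmetric formula for $y$, and $S_{\alpha }(z,R^{\{x,y\}})\leq n\alpha _{3}\leq S_{\alpha }(x,R^{\{x,y\}})$ for every third alternative $z$; hence $x\in C_{\alpha }(R^{\{x,y\}})$ iff $(n(x,y,R)-n(y,x,R))(\alpha _{1}-\alpha _{2})\geq 0$. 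It follows that, for every $R\in L^{N}$: if $\alpha _{1}=\alpha _{2}$ then $\mathcal{N}_{C}(R)=A$, and if $\alpha _{1}>\alpha _{2}$ then $\mathcal{N}_{C}(R)=CW(R)$, the set of Condorcet winners of $R$.

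If $\alpha _{1}=\alpha _{2}$, take $R$ the unanimous profile in which every voter ranks $c_{1}\succ c_{2}\succ \dots \succ c_{m}$. Then $\mathcal{N}_{C}(R)=A$, while $S_{\alpha }(c_{m},R)=n\alpha _{m}<n\alpha _{1}=S_{\alpha }(c_{1},R)$, so $c_{m}\notin C_{\alpha }(R)$; thus $C_{\alpha }(R)\subsetneq \mathcal{N}_{C}(R)$ and $TC$ fails. (Only $m\geq 2$ and $\alpha _{1}>\alpha _{m}$ are used here.) From now on $\alpha _{1}>\alpha _{2}$, and without loss of generality $\alpha _{3}=0$ (translating $\alpha$ by a constant changes neither the class nor $C_{\alpha }$); note $\alpha _{1}>0$ and $\alpha _{2}\geq 0$. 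I now need $R\in L^{N}$ with $CW(R)\neq \emptyset $ and $C_{\alpha }(R)\neq CW(R)$.

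For $n$ even, $n=2k$, split the electorate into two equal halves and use one of two profiles. Profile A has the halves voting $c_{1}c_{2}c_{3}c_{4}\cdots c_{m}$ and $c_{3}c_{2}c_{1}c_{4}\cdots c_{m}$: then $CW(R)=\{c_{1},c_{2},c_{3}\}$, while from $S_{\alpha }(c_{1})=S_{\alpha }(c_{3})=k\alpha _{1}$ and $S_{\alpha }(c_{2})=2k\alpha _{2}$ one gets $C_{\alpha }(R)=\{c_{1},c_{3}\}$ or $\{c_{2}\}$, unless $\alpha _{1}=2\alpha _{2}$. Profile B has the halves voting $c_{1}c_{2}c_{3}c_{4}\cdots c_{m}$ and $c_{2}c_{3}c_{1}c_{4}\cdots c_{m}$: then $CW(R)=\{c_{1},c_{2}\}$, while $c_{2}$ strictly maximises the score (since $S_{\alpha }(c_{2})=k(\alpha _{1}+\alpha _{2})$ beats $S_{\alpha }(c_{1})=k\alpha _{1}$, $S_{\alpha }(c_{3})=k\alpha _{2}$ and all later scores), so $C_{\alpha }(R)=\{c_{2}\}$, unless $\alpha _{2}=0$. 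Because $\alpha _{1}=2\alpha _{2}$ and $\alpha _{2}=0$ together force $\alpha _{1}=0$, at least one of A, B gives $C_{\alpha }(R)\neq CW(R)=\mathcal{N}_{C}(R)$, so $TC$ fails.

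For $n$ odd, the hypotheses $n\geq 2$ and $n\neq 3$ give $n=2k+1$ with $k\geq 2$, and this is the only place where $n\neq 3$ is used. Use one of two profiles, each making $c_{1}$ the unique Condorcet winner. Profile 1: $k$ voters $c_{2}c_{1}c_{3}c_{4}\cdots$, $k$ voters $c_{3}c_{1}c_{2}c_{4}\cdots$, one voter $c_{1}c_{2}c_{3}c_{4}\cdots$; here $S_{\alpha }(c_{2})-S_{\alpha }(c_{1})=(k-1)\alpha _{1}-(2k-1)\alpha _{2}$, so $C_{\alpha }(R)\neq \{c_{1}\}$ whenever $(k-1)\alpha _{1}\geq (2k-1)\alpha _{2}$. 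Profile C: $k+1$ voters $c_{1}c_{2}c_{3}c_{4}\cdots$, $k$ voters $c_{2}c_{3}c_{1}c_{4}\cdots$; here $S_{\alpha }(c_{2})-S_{\alpha }(c_{1})=(k+1)\alpha _{2}-\alpha _{1}$, so $C_{\alpha }(R)\neq \{c_{1}\}$ whenever $(k+1)\alpha _{2}\geq \alpha _{1}$. If both profiles failed we would have $(k-1)\alpha _{1}<(2k-1)\alpha _{2}$ and $\alpha _{1}>(k+1)\alpha _{2}$, whence $(k^{2}-1)\alpha _{2}<(k-1)\alpha _{1}<(2k-1)\alpha _{2}$; with $\alpha _{2}>0$ this forces $k^{2}<2k$, impossible for $k\geq 2$, and if $\alpha _{2}=0$ then Profile 1 already works. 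Hence in every case $C_{\alpha }(R)\neq \{c_{1}\}=CW(R)=\mathcal{N}_{C}(R)$, so $TC$ fails. The main obstacle is not any single step but arranging that one short list of profiles suffices for all scoring vectors at once: the reduction $\mathcal{N}_{C}(R)=CW(R)$ does the conceptual work, after which everything is bookkeeping, and the numerical inequality $k^{2}\geq 2k$ for $k\geq 2$ is precisely what lets the two odd-$n$ profiles jointly cover every admissible $\alpha$ — a gap opens at $k=1$, i.e. $n=3$, where indeed certain scoring rules with $\alpha _{1}>2\alpha _{2}>\alpha _{3}$ do satisfy $TC$.
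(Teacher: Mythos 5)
Your proof is correct, and it is organized quite differently from the paper's. Your key move is the preliminary reduction: for any linear profile $R$ and any pair $\{x,y\}$, the only rival of $x$ in $R^{\{x,y\}}$ is $y$, so $x\in C_{\alpha}(R^{\{x,y\}})$ iff $(n(x,y,R)-n(y,x,R))(\alpha_{1}-\alpha_{2})\geq 0$; hence $\mathcal{N}_{C}(R)=A$ when $\alpha_{1}=\alpha_{2}$ and $\mathcal{N}_{C}(R)=CW(R)$ when $\alpha_{1}>\alpha_{2}$. This turns the problem into exhibiting linear profiles whose score maximizers differ from their Condorcet winners, which you then do with a unanimous profile (for $\alpha_{1}=\alpha_{2}$), two profiles for even $n$, and two profiles for odd $n\geq 5$, together with a short covering argument (the incompatibility of $\alpha_{1}=2\alpha_{2}$ with $\alpha_{2}=0$ in the even case, and the inequality $k^{2}\geq 2k$ for $k\geq 2$ in the odd case) showing every admissible $\alpha$ is caught. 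The paper instead fixes one or two profiles per subcase of the scoring vector ($\alpha_{1}=\alpha_{2}$ or $\alpha_{1}>\alpha_{2}>\alpha_{3}$, $\alpha_{1}>\alpha_{2}=\alpha_{3}$, etc.) and verifies membership in the nice set directly by score computations, without isolating the general identity $\mathcal{N}_{C}(R)=CW(R)$ (respectively $A$). Your route buys a cleaner structure, reusable computations (all score checks reduce to two differences per profile after normalizing $\alpha_{3}=0$), and it makes completely explicit where the hypothesis $n\neq 3$ enters, namely at $k\geq 2$; the paper's route is more pedestrian but requires no lemma. Two small caveats: when you conclude ``$C_{\alpha}(R)\neq\{c_{1}\}$'' from $S_{\alpha}(c_{2})\geq S_{\alpha}(c_{1})$ you are implicitly using that either $c_{2}$ is also a maximizer or $c_{1}$ is not one — worth one sentence; and your closing parenthetical claim that for $n=3$ certain scoring rules with $\alpha_{1}>2\alpha_{2}>\alpha_{3}$ satisfy TC is an unproved aside (it would also require saying how the rule is extended from $L^{N}$ to $W^{N}$), though it does not affect the validity of the proof of the stated proposition.
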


\begin{proof}
Assume that $m\geq 3$ and $n\geq 2$. Let $C$ be a $L$-scoring SDR. Then by
definition, there exists a scoring vector $\alpha $ such that $C\left(
R\right) =C_{\alpha }\left( R\right) $ for all $R\in L^{N}$. In what
follows: (i) $a$, $b$ and $c$ are distinct alternatives and $B=A\backslash
\left\{ a,b,c\right\} $; (ii) $l$ is a given linear order on $B$; and (iii) $%
xyz\left[ l\right] $ (respectively $xy\left[ l\right] z$) corresponds to the
linear order in which $x$ is ranked first, $y$ is second, $z$ is third
(respectively bottom ranked) and alternatives in $B$ are ranked according to
$l$ with $\left\{ x,y,z\right\} =\left\{ a,b,c\right\} $.

Case 1 : $n$ is even and $n\geq 2$. We pose $n=2p$ and $N=N_{1}\cup N_{2}$
with $\left\vert N_{1}\right\vert =\left\vert N_{2}\right\vert =p$.

First assume that $\alpha _{1}=\alpha _{2}$ or $\alpha _{1}>\alpha
_{2}>\alpha _{3}$. Let $R$ be the profile such that for each $i\in N$, $%
R_{i}=ab\left[ l\right] c$ if $i\in N_{1}$, $R_{i}=cab\left[ l\right] $ if $%
i\in N_{2}$. In both cases, note that $S_{\alpha }\left( a,R\right)
-S_{\alpha }\left( c,R\right) =p\left( \alpha _{2}-\alpha _{m}\right) >0$.
It follows that $c\notin C_{\alpha }\left( R\right) =C\left( R\right) $. But
for all $x\in A\backslash \left\{ c\right\} $, $c\in C_{\alpha }\left(
R^{\left\{ c,x\right\} }\right) =C\left( R^{\left\{ c,x\right\} }\right) $.
Therefore $c\in \mathcal{N}_{C}\left( R\right) $. Clearly $C\left( R\right)
\neq \mathcal{N}_{C}\left( R\right) \neq \emptyset $. Thus $C$ does not
satisfy TC.

Now assume that $\alpha _{1}>\alpha _{2}=\alpha _{3}$. Let $R$ be the
profile such that for each $i\in N$, $R_{i}=ab\left[ l\right] c$ if $i\in
N_{1}$, $R_{i}=cba\left[ l\right] $ if $i\in N_{2}$. We have $S_{\alpha
}\left( a,R\right) -S_{\alpha }\left( b,R\right) =p\left( \alpha _{1}-\alpha
_{2}\right) >0$. Therefore $b\notin C_{\alpha }\left( R\right) =C\left(
R\right) $. But for all $x\in A\backslash \left\{ b\right\} $, $b\in
C_{\alpha }\left( R^{\left\{ b,x\right\} }\right) =C\left( R^{\left\{
b,x\right\} }\right) $. This implies that $b\in \mathcal{N}_{C}\left(
P\right) $. Clearly $C\left( P\right) \neq \mathcal{N}_{C}\left( P\right)
\neq \emptyset $. This proves that $C$ does not satisfy TC.

Case 2 : $n$ is odd and $n\geq 5$. We write $n=3+2p$ and $N=\left\{
1,2,3\right\} \cup N_{1}\cup N_{2}$ with $\left\vert N_{1}\right\vert
=\left\vert N_{2}\right\vert =p\geq 1$. Consider the profile $R$ such that $%
R_{1}=abc\left[ l\right] $, $R_{2}=bca\left[ l\right] $, $R_{3}=cab\left[ l%
\right] $, $R_{i}=ab\left[ l\right] c$ if $i\in N_{1}$ and $R_{i}=ba\left[ l%
\right] c$ if $i\in N_{2}$.

First suppose that $\alpha _{1}=\alpha _{2}$. It can be checked that $%
\left\{ x,y\right\} \subseteq C_{\alpha }\left( R^{\left\{ x,y\right\}
}\right) $ for all $\left\{ x,y\right\} \subseteq A$. This implies that $%
\mathcal{N}_{C}\left( R\right) =A$. But $S_{\alpha }\left( a,R\right)
-S_{\alpha }\left( c,R\right) =p\left( \alpha _{1}+\alpha _{2}-2\alpha
_{m}\right) >0$. Hence $c\notin C_{\alpha }\left( R\right) =C\left( R\right)
$. Therefore $C\left( R\right) \neq \mathcal{N}_{C}\left( R\right) $ while $%
\mathcal{N}_{C}\left( R\right) \neq \emptyset $. Clearly, $C$ does not
satisfy TC.

Now suppose that $\alpha _{1}>\alpha _{2}$. Note that $a\in C_{\alpha
}\left( R^{\left\{ a,y\right\} }\right) $ for all $y\in A\backslash \left\{
a\right\} $. This implies that $a\in \mathcal{N}_{C}\left( R\right) \neq
\emptyset $. Moreover $S_{\alpha }\left( b,R\right) =S_{\alpha }\left(
a,R\right) \geq S_{\alpha }\left( x,R\right) $ for all $x\in A$. Thus $b\in
C_{\alpha }\left( R\right) =C\left( R\right) $. But $S_{\alpha }\left(
a,R^{\left\{ a,b\right\} }\right) -S_{\alpha }\left( b,R^{\left\{
a,b\right\} }\right) =\alpha _{1}-\alpha _{2}>0$. Therefore $b\notin
C_{\alpha }\left( R^{\left\{ a,b\right\} }\right) =C\left( R^{\left\{
a,b\right\} }\right) $. This implies that $b\notin C\left( R\right) $ and $%
C\left( R\right) \neq \mathcal{N}_{C}\left( R\right) \neq \emptyset $.
Clearly, $C$ does not satisfy TC.\medskip
\end{proof}

\subsection{Consequences of top monotonicity and top rationality}

The next results highlight some consequences of combining TM and TR.

\begin{proposition}
\label{prop TR TM to xy}Assume that $C$ satisfies TM and TR.

Then for all $R\in W^{N}$ and all $\left\{ x,y\right\} \subseteq A$, $%
C\left( R^{\left\{ x,y\right\} }\right) \subseteq \left\{ x,y\right\} $.
\end{proposition}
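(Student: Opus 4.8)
The goal is to show that for any profile $R$ and any pair $\{x,y\}$, the winning set of the top-shifted profile $R^{\{x,y\}}$ contains only $x$ and $y$ — i.e.\ no ``bottom" alternative $z\in A\setminus\{x,y\}$ can win at a $2$-profile built over $\{x,y\}$. The natural strategy is a proof by contradiction: suppose $z\in C(R^{\{x,y\}})$ for some $z\notin\{x,y\}$, and derive a violation of top monotonicity using top rationality as the source of a ``starting win''.

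First I would fix notation: given $R$ and $\{x,y\}$, write $P:=R^{\{x,y\}}$, which is a $2$-profile in which $x$ and $y$ sit (in each voter's order, in their $R$-relative ranking) strictly above every other alternative, in particular above $z$. The idea is to compare $P$ with the top-shift of $R$ over the pair $\{z,y\}$ (or $\{z,x\}$). By top rationality (TR) applied to $R$ and the pair $\{z,y\}$, we have $C(R^{\{z,y\}})\cap\{z,y\}\neq\emptyset$, so either $z\in C(R^{\{z,y\}})$ or $y\in C(R^{\{z,y\}})$. I would like to engineer, from the assumed win of $z$ in $P=R^{\{x,y\}}$, an additional support of $z$ against $x$ (in the sense of $\vartriangleright^{z,x}$) that transforms $R^{\{z,y\}}$-type data into $R^{\{x,y\}}$-type data, and then invoke TM to conclude that $x\notin C(R^{\{x,y\}})$; combined with the symmetric argument ruling out $y$, and with TR applied to $\{x,y\}$ from $R$ (which forces $C(R^{\{x,y\}})\cap\{x,y\}\neq\emptyset$), we reach a contradiction.

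The delicate point — and I expect this to be the main obstacle — is to set up the two profiles fed into TM so that the hypothesis of TM is literally met: TM requires a pair, say $\{z,y\}$, a profile pair $R'\vartriangleright^{z,x}Q'$, and the input datum $z\in C(R'^{\{z,y\}})$, yielding $z\in C(Q'^{\{z,y\}})$ and $x\notin C(Q'^{\{z,y\}})$. So I need to exhibit $R'$ with $z$ winning its $\{z,y\}$-top-shift and with $Q'^{\{z,y\}}$ coinciding with (or behaving like) $R^{\{x,y\}}$. A clean way: start from $R$, and first note that since $z\in C(R^{\{x,y\}})=C(R^{\{x,y\}})$ and in $R^{\{x,y\}}$ the alternative $z$ is ranked below both $x$ and $y$, one can view $R^{\{x,y\}}$ itself as $(R^{\{x,y\}})^{\{z,?\}}$ only after a further top-shift — so I would instead build $Q$ from $R$ by pushing $z$ up so that $Q^{\{z,y\}}=R^{\{x,y\}}$ up to the roles of $z$ and $x$, using the fact that in a $2$-profile over $\{x,y\}$ the third alternative $z$'s position is pinned down. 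Concretely, pick $Q$ so that each $Q_i$ has $x$ and $y$ on top exactly as in $R^{\{x,y\}}$ but with $z$ sitting immediately below $\{x,y\}$; then $Q^{\{z,y\}}$ puts $z,y$ on top and $x$ third — a $2$-profile — and one checks $x\notin C(Q^{\{z,y\}})$ would follow from TM applied to a suitable predecessor. Making ``a suitable predecessor with $z$ winning'' rigorous is where TR must be slotted in, and where care with the $\vartriangleright$ relation (the strict-move clause (ii) and the no-decrease clause (i)) is essential.

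Once the machinery is in place, the endgame is short: from $z\in C(R^{\{x,y\}})$ I will have derived $x\notin C(R^{\{x,y\}})$ and, by the same construction with $x$ and $y$ interchanged, $y\notin C(R^{\{x,y\}})$; but TR applied to $R$ and $\{x,y\}$ gives $C(R^{\{x,y\}})\cap\{x,y\}\neq\emptyset$, a contradiction. Hence no such $z$ exists and $C(R^{\{x,y\}})\subseteq\{x,y\}$. If the direct construction of the TM-predecessor proves awkward, a fallback is to iterate TM along a chain of single-voter improvements of $z$ — each step legitimate because $R\vartriangleright^{z,x}Q$ only requires the rank of $z$ never to decrease and at least one voter to strictly promote $z$ above $x$ — starting from a profile where TR hands us a win for $z$ in the relevant top-shift, and propagating it.
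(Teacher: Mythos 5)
There is a genuine gap, and it sits exactly where you say you expect trouble: the TM application is never actually constructed, and the direction you sketch cannot produce the conclusion you need. In the paper's definition of TM, the alternative receiving additional support must be one of the two top-shifted alternatives: the hypothesis is $a\in C\left( R'^{\left\{ a,b\right\} }\right) $ with $R'\vartriangleright ^{a,c}Q'$, and the conclusion is a statement about $C\left( Q'^{\left\{ a,b\right\} }\right) $, a $2$-profile over the pair $\left\{ a,b\right\} $ containing the supported alternative $a$. Your plan takes the supported alternative to be $z$ and the pair to be $\left\{ z,y\right\} $ (or $\left\{ z,x\right\} $), so any conclusion you extract concerns $C\left( Q'^{\left\{ z,y\right\} }\right) $, a profile in which $z$ is unanimously above $x$. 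That set can never be $C\left( R^{\left\{ x,y\right\} }\right) $, since in $R^{\left\{ x,y\right\} }$ every voter ranks $x$ strictly above $z$; indeed your own ``concrete'' attempt ends with $x\notin C\left( Q^{\left\{ z,y\right\} }\right) $, which is about the wrong profile. So the chain from ``$z$ wins in $R^{\left\{ x,y\right\} }$'' to ``$x\notin C\left( R^{\left\{ x,y\right\} }\right) $ and $y\notin C\left( R^{\left\{ x,y\right\} }\right) $'' is not established, and the iterate-single-voter fallback does not fix this structural mismatch. The whole contradiction scaffold (rule out both $x$ and $y$, then clash with TR) is also an unnecessary detour.

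The paper's proof is direct and hinges on one trick you never hit on: apply TM to two profiles whose $\left\{ x,y\right\} $-top-shifts \emph{coincide}. By TR, without loss of generality $x\in C\left( R^{\left\{ x,y\right\} }\right) $. Define $Q$ by $Q_{i}=R_{i}|_{A\backslash \left\{ x,y\right\} }R_{i}|_{\left\{ x,y\right\} }$, i.e.\ push $x$ and $y$ to the bottom of each voter's order. Then $Q^{\left\{ x,y\right\} }=R^{\left\{ x,y\right\} }$, so $x\in C\left( Q^{\left\{ x,y\right\} }\right) $, and for every $z\in A\backslash \left\{ x,y\right\} $ one has $Q\vartriangleright ^{x,z}R^{\left\{ x,y\right\} }$ (the rank of $x$ only improves from $Q$ to $R^{\left\{ x,y\right\} }$, and every voter switches from $z\succ x$ to $x\succ z$). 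Applying TM with base $Q$, target $R^{\left\{ x,y\right\} }$ and pair $\left\{ x,y\right\} $ gives $z\notin C\left( \left( R^{\left\{ x,y\right\} }\right) ^{\left\{ x,y\right\} }\right) =C\left( R^{\left\{ x,y\right\} }\right) $ for every such $z$, which is the claim. If you want to salvage your writeup, replace the $\left\{ z,y\right\} $-based construction by this bottom-shift profile $Q$; as it stands, the proposal is a plan whose key step both fails as described and is left unproved.
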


\begin{proof}
Assume that $C$ satisfies TM and TR. Consider $R\in W^{N}$ and $\left\{
x,y\right\} \subseteq A$. By TR, $C\left( R^{\left\{ x,y\right\} }\right)
\cap \left\{ x,y\right\} \neq \emptyset $. Without lost of generality,
assume that $x\in C\left( R^{\left\{ x,y\right\} }\right) $. Consider the
profile $Q$ define by $Q_{i}=R_{i}|_{A\backslash \left\{ x,y\right\}
}R_{i}|_{\left\{ x,y\right\} }$ for all $i\in N$. Note that $Q$ is obtained
from $R$ by only moving $x$ and $y$ to the bottom in each voter preference
without changing their relative ranking. Also note that $Q^{\left\{
x,y\right\} }=R^{\left\{ x,y\right\} }$ and that $Q\vartriangleright
^{x,z}R^{\left\{ x,y\right\} }$ for each $z\in A\backslash \left\{
x,y\right\} $. Since $x\in C\left( R^{\left\{ x,y\right\} }\right) $, then $%
x\in C\left( Q^{\left\{ x,y\right\} }\right) $ and by TM, $z\notin C\left(
R^{\left\{ x,y\right\} }\right) $ for any $z\in A\backslash \left\{
x,y\right\} $. Therefore $C\left( R^{\left\{ x,y\right\} }\right) \subseteq
\left\{ x,y\right\} $.
\end{proof}

\begin{proposition}
\label{prop TR TM to =12}Assume that $C$ satisfies TM and TR.

For all $R,Q\in W^{N}$ and all $\left\{ x,y\right\} \subseteq A$, if $%
R_{i}|_{\left\{ x,y\right\} }=Q_{i}|_{\left\{ x,y\right\} }$ for all $i\in N$%
, then $C\left( R^{\left\{ x,y\right\} }\right) =C\left( Q^{\left\{
x,y\right\} }\right) $.
\end{proposition}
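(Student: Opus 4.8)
The goal is to show that the winning set of a top-shift profile depends only on the relative ordering of $x$ and $y$ in each voter's preference, not on how the other alternatives are arranged. The plan is to reduce the general case to a sequence of single-voter modifications and then use Proposition \ref{prop TR TM to xy} together with (TM) to show that each such modification leaves the winning set unchanged.

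First I would fix $R, Q \in W^{N}$ and $\{x,y\} \subseteq A$ with $R_{i}|_{\{x,y\}} = Q_{i}|_{\{x,y\}}$ for all $i$. It suffices to treat the case where $R$ and $Q$ differ in the preference of a single voter, say voter $k$, since the general case follows by composing finitely many such steps (each intermediate profile still agrees with $R$ on the restriction to $\{x,y\}$, so it is enough that the claimed equality is ``transitive'' along the chain). For the single-voter case, note first that $R^{\{x,y\}}$ and $Q^{\{x,y\}}$ then differ only in voter $k$'s preference, and crucially $R_{k}^{\{x,y\}}|_{\{x,y\}} = Q_{k}^{\{x,y\}}|_{\{x,y\}}$, i.e. both have $x$ and $y$ occupying the top two positions in the same order. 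By Proposition \ref{prop TR TM to xy}, $C(R^{\{x,y\}}) \subseteq \{x,y\}$ and $C(Q^{\{x,y\}}) \subseteq \{x,y\}$, and by (TR) each is a nonempty subset of $\{x,y\}$.

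Now I would show the two winning sets coincide by a symmetric argument. Without loss of generality suppose $x \in C(R^{\{x,y\}})$; I want to conclude $x \in C(Q^{\{x,y\}})$. The idea is to construct a profile $P$ such that $P^{\{x,y\}} = R^{\{x,y\}}$ and $P \vartriangleright^{x,z} Q^{\{x,y\}}$ for a suitable $z$ — or more robustly, to move through an auxiliary ``bottom'' profile as in the proof of Proposition \ref{prop TR TM to xy}. Concretely: let $P$ be obtained from $R^{\{x,y\}}$ by pushing $x$ and $y$ to the bottom of every voter's preference (keeping their relative order), so $P^{\{x,y\}} = R^{\{x,y\}}$; then for every $z \in A \setminus \{x,y\}$ we have $P \vartriangleright^{x,z} R^{\{x,y\}}$, and similarly $P \vartriangleright^{x,z} Q^{\{x,y\}}$, because in both target profiles $x$ and $y$ sit on top while $x$ is weakly below $y$ (or $x$ below some $z$) in $P$. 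Wait — for the additional-support relation we need a voter who strictly ranks $x$ above $z$ in the target but not in $P$; since $x$ is at the very bottom in $P$ and at the top in $R^{\{x,y\}}$, this holds for any voter and any $z \neq x,y$. Thus from $x \in C(P^{\{x,y\}}) = C(R^{\{x,y\}})$ and $P \vartriangleright^{x,z} Q^{\{x,y\}}$, (TM) yields $x \in C((Q^{\{x,y\}})^{\{x,y\}}) = C(Q^{\{x,y\}})$ (using that top-shifting a $2$-profile on its own top pair is the identity). Exchanging the roles of $R$ and $Q$ gives the reverse inclusion, and since both winning sets are nonempty subsets of $\{x,y\}$, if one contains $x$ so does the other, and likewise for $y$; hence $C(R^{\{x,y\}}) = C(Q^{\{x,y\}})$.

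The main obstacle I anticipate is verifying the additional-support relation $\vartriangleright^{x,z}$ carefully: one must check both clauses of its definition — that the rank of $x$ never decreases from $P$ to the target, and that strict improvement against $z$ occurs for at least one voter — and one must make sure the same auxiliary profile $P$ works simultaneously for $R^{\{x,y\}}$ and $Q^{\{x,y\}}$. A secondary subtlety is that (TM) as stated delivers the conclusion ``$z \notin C(Q^{\{x,y\}})$'' in addition to ``$x \in C(Q^{\{x,y\}})$''; the first part is exactly what is already guaranteed by Proposition \ref{prop TR TM to xy}, so it is harmless, but one should note that invoking (TM) requires $z \in A \setminus \{x\}$, which is satisfied for every $z \in A \setminus \{x,y\}$ (and since $m \geq 3$ such $z$ exists). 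Once these bookkeeping points are in place the argument is routine.
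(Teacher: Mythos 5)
Your argument is correct and follows essentially the same route as the paper: you build the auxiliary profile with $x,y$ pushed to the bottom (the paper's $H$, your $P$), observe that its top-shift recovers $R^{\left\{ x,y\right\} }$ while it stands in the relation $\vartriangleright ^{x,z}$ to $Q^{\left\{ x,y\right\} }$ for any $z\in A\backslash \left\{ x,y\right\} $, and then apply TM (with Proposition \ref{prop TR TM to xy} and TR handling the containment in $\left\{ x,y\right\} $), exactly as in the paper. The preliminary single-voter reduction you mention is harmless but superfluous, since your construction already handles arbitrary $R,Q$ agreeing on $\left\{ x,y\right\} $.
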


\begin{proof}
Assume that $C$ satisfies TM and TR. Consider $R,Q\in W^{N}$ and $\left\{
x,y\right\} \subseteq A$ such that $R_{i}|_{\left\{ x,y\right\}
}=Q_{i}|_{\left\{ x,y\right\} }$ for all $i\in N$. Since $C$ satisfies TM
and TR, then by Proposition \ref{prop TR TM to xy}, $C\left( R^{\left\{
x,y\right\} }\right) \subseteq \left\{ x,y\right\} $. Suppose that $x\in
C\left( R^{\left\{ x,y\right\} }\right) $, then consider the profile $H$
define by $H_{i}=R_{i}|_{A\backslash \left\{ x,y\right\} }R_{i}|_{\left\{
x,y\right\} }$ for all $i\in N$. Note that $H^{\left\{ x,y\right\}
}=R^{\left\{ x,y\right\} }$ and that $H\vartriangleright ^{x,z}Q^{\left\{
x,y\right\} }$ for any $z\in A\backslash \left\{ x,y\right\} $. Since $x\in
C\left( R^{\left\{ x,y\right\} }\right) $, then $x\in C\left( H^{\left\{
x,y\right\} }\right) $ and by TM, $x\in C\left( Q^{\left\{ x,y\right\}
}\right) $. Therefore $C\left( R^{\left\{ x,y\right\} }\right) \subseteq
C\left( Q^{\left\{ x,y\right\} }\right) $. Similarly, we prove that $C\left(
Q^{\left\{ x,y\right\} }\right) \subseteq C\left( R^{\left\{ x,y\right\}
}\right) $. Hence $C\left( R^{\left\{ x,y\right\} }\right) =C\left(
Q^{\left\{ x,y\right\} }\right) $.
\end{proof}

\subsection{Unrestricted domain: a characterization}

\begin{proposition}
\label{prop CW included}Assume that $C:W^{N}\rightrightarrows A$ satisfies
WTC, TM, TS and TR.

Then if $x$ is a Condorcet winner in $R$, then $x\in C\left( R\right) $.
\end{proposition}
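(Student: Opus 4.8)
The plan is to show that a Condorcet winner $x$ of $R$ belongs to the nice set $\mathcal{N}_{C}(R)$, and then invoke WTC to conclude $x \in C(R)$. So the real content is to prove: if $x \in CW(R)$ then for every $y \in A \setminus \{x\}$ we have $x \in C(R^{\{x,y\}})$. Fix such a $y$. First I would analyze the top-shift profile $R^{\{x,y\}}$: by construction $x$ and $y$ are the two unanimously top-ranked alternatives, and the relative ranking of $x$ versus $y$ is unchanged from $R$, so $n(x,y,R^{\{x,y\}}) = n(x,y,R)$ and $n(y,x,R^{\{x,y\}}) = n(y,x,R)$. Since $x \in CW(R)$ we have $n(x,y,R) \geq n(y,x,R)$, hence $x$ beats or ties $y$ in $R^{\{x,y\}}$, and $x$ beats every other alternative too (it is top-ranked by everyone), so $x \in CW(R^{\{x,y\}})$.

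Next I would exploit the symmetry and monotonicity axioms on the $2$-profile $R^{\{x,y\}}$. By Proposition~\ref{prop TR TM to xy}, $C(R^{\{x,y\}}) \subseteq \{x,y\}$, so the winning set is $\{x\}$, $\{y\}$, or $\{x,y\}$. The goal is to rule out $C(R^{\{x,y\}}) = \{y\}$. I would split on whether $x$ ties $y$ with everyone or strictly beats $y$ with someone. If $n(x,y,R) = n(y,x,R)$, i.e. the profile is ``balanced'' between $x$ and $y$, then by Proposition~\ref{prop TR TM to =12} the winning set depends only on the pattern of $\{x,y\}$-restrictions, and one can build (via a voter permutation, using TA) a profile $Q$ with $Q_i|_{\{x,y\}}$ obtained from $R_i|_{\{x,y\}}$ by swapping the roles of $x$ and $y$; applying TN (with the transposition $\sigma = (x\,y)$) forces $C(R^{\{x,y\}})$ to be symmetric in $x$ and $y$, hence equals $\{x\}$, $\{y\}$ or $\{x,y\}$ in a way that cannot single out $y$ over $x$ — so if $y \in C(R^{\{x,y\}})$ then $x \in C(R^{\{x,y\}})$. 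If instead $n(x,y,R) > n(y,x,R)$, then I would use TM: start from a balanced profile $R'$ (where the $\{x,y\}$-restrictions have been perturbed to equalize the counts) for which the previous argument already gives $x \in C(R'^{\{x,y\}})$, and note $R' \vartriangleright^{x,y} R$ (an additional support of $x$ against $y$); TM then yields $x \in C(R^{\{x,y\}})$ (and $y \notin C(R^{\{x,y\}})$), which is even stronger than needed. In all cases $x \in C(R^{\{x,y\}})$.

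Having verified $x \in C(R^{\{x,y\}})$ for all $y \neq x$, we get $x \in \mathcal{N}_{C}(R)$, and WTC gives $x \in C(R)$, which is the claim.

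The main obstacle I anticipate is the balanced case: establishing that a $2$-profile in which exactly half (or the appropriate split) of the voters prefer $x$ to $y$ and the other half prefer $y$ to $x$ must have a winning set symmetric in $x$ and $y$. The delicate point is that TA only lets us permute voters, and TN only lets us permute alternatives across the \emph{whole} profile, so matching up ``swap $x$ and $y$ in the restrictions'' with ``relabel $x$ and $y$ everywhere'' requires carefully combining the two — and handling indifferences $x \sim_{R_i} y$, which are fixed points of the swap. I would isolate this as a lemma: for any $2$-profile $P$ with $n(x,y,P) = n(y,x,P)$ and top pair $\{x,y\}$, $\{x,y\}$-symmetry of $C$ forces $x \in C(P) \iff y \in C(P)$; the rest of the proof is then bookkeeping with the top-shift operation and the already-established Propositions~\ref{prop TR TM to xy} and~\ref{prop TR TM to =12}.
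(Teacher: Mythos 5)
Your proposal is correct and is essentially the paper's own argument recast directly instead of by contradiction: the same ingredients appear in the same roles — Propositions \ref{prop TR TM to xy} and \ref{prop TR TM to =12} to confine and stabilize $C(R^{\{x,y\}})$, the TN swap of $x$ and $y$ combined with a TA voter bijection between the $x\succ y$ and $y\succ x$ groups (indifferent voters fixed) to get the symmetry, and TM to absorb the surplus voters, followed by WTC. The only detail to pin down is how you ``equalize the counts'' in the unbalanced case: reversing voters changes the margin by two (a parity issue), so make the surplus voters indifferent between $x$ and $y$ instead, which is available since preferences are weak orders and still yields $R'\vartriangleright^{x,y}R$; the paper handles this same surplus through its auxiliary profile $U$ inside the contradiction.
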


\begin{proof}
Suppose that $x$ is a Condorcet winner in $R$. Assume that $x\notin C\left(
R\right) $. By WTC, there exists $y\in A\backslash \left\{ x\right\} $ such
that $x\notin C\left( R^{\left\{ x,y\right\} }\right) $. Let $B=A\backslash
\left\{ x,y\right\} $.\medskip

Since $C$ satisfies TM and TR, then by Proposition \ref{prop TR TM to xy}, $%
C\left( R^{\left\{ x,y\right\} }\right) \subseteq \left\{ x,y\right\} $.
Since $C\left( R^{\left\{ x,y\right\} }\right) $ is nonempty, then $C\left(
R^{\left\{ x,y\right\} }\right) =\left\{ y\right\} $. Let $N_{1}=\left\{
i\in N,x\succ _{R_{i}}y\right\} ,$ $N_{2}=\left\{ i\in N,y\succ
_{R_{i}}x\right\} $ and $N_{3}=\left\{ i\in N,x\sim _{R_{i}}y\right\} $.
Since $x$ is a Condorcet winner in $R$, then $\left\vert N_{1}\right\vert
\geq \left\vert N_{2}\right\vert $. Thus $N_{1}=S_{1}\cup T_{1}$ with $%
\left\vert S_{1}\right\vert =\left\vert N_{2}\right\vert $ for some $%
T_{1}\subset N$. Moreover $R_{i}^{\left\{ x,y\right\} }=xyR_{i}|_{B}$ if $%
i\in N_{1}$, $R_{i}^{\left\{ x,y\right\} }=yxR_{i}|_{B}$ if $i\in N_{2}$ and
$R_{i}^{\left\{ x,y\right\} }=\left( xy\right) R_{i}|_{B}$ if $i\in N_{3}$.
Consider any one to one mapping $\nu $ from $S_{1}$ to $N_{2}$ and define a
permutation $\pi $ of $N$ as follows: $\pi \left( i\right) =\nu \left(
i\right) $ if $i\in S_{1}$, $\pi \left( i\right) =\nu ^{-1}\left( i\right) $
if $i\in N_{2}$ and $\pi \left( i\right) =i$ if $i\in T_{1}\cup N_{3}$. Note
that $\pi \left( S_{1}\right) =N_{2}$, $\pi \left( N_{2}\right) =S_{1}$, $%
\pi \left( T_{1}\cup N_{3}\right) =T_{1}\cup N_{3}$ and $\pi ^{-1}=\pi $.

\medskip

First consider the profile $Q$ defined by $Q_{i}=R_{i}|_{\left\{ x,y\right\}
}R_{\pi \left( i\right) }|_{B}$ for all $i\in N$. Note that $Q^{\left\{
x,y\right\} }=Q$ and that $R_{i}|_{\left\{ x,y\right\} }=Q_{i}|_{\left\{
x,y\right\} }$ for all $i\in N$. By Proposition \ref{prop TR TM to =12}, $%
C\left( Q^{\left\{ x,y\right\} }\right) =C\left( R^{\left\{ x,y\right\}
}\right) =$ $\left\{ y\right\} $. That is $C\left( Q\right) =\left\{
y\right\} $.\medskip

Now let $H$ be the profile obtained from $Q$ by only permuting $x$ and $y$.
That is $H=\sigma \left( Q\right) $ where $\sigma $ is the permutation of $A$
defined by $\sigma \left( x\right) =y$, $\sigma \left( y\right) =x$ and $%
\sigma \left( z\right) =z$ for all $z\in A\backslash \left\{ x,y\right\} $.
Then $H_{i}=yxR_{\pi \left( i\right) }|_{B}$ if $i\in N_{1}$, $%
H_{i}=xyR_{\pi \left( i\right) }|_{B}$ if $i\in N_{2}$ and $H_{i}=\left(
xy\right) R_{\pi \left( i\right) }|_{B}$ if $i\in N_{3}$. By TS
(particularly TN) , $C\left( H\right) =C\left( \sigma \left( Q\right)
\right) =\sigma \left( \left\{ y\right\} \right) =\left\{ x\right\} $. Also
consider the profile $U=H_{\pi }$. Since $\pi ^{-1}=\pi $, it follows that:
(i) for all $i\in S_{1}$, $\pi \left( i\right) \in N_{2}$ and $U_{i}=H_{\pi
\left( i\right) }=xyR_{\pi \left[ \pi \left( i\right) \right]
}|_{B}=xyR_{i}|_{B}=R_{i}$; (ii) for all $i\in T_{1}$, $\pi \left( i\right)
=i\in N_{1}$ and $U_{i}=H_{i}=yxR_{i}|_{B}$; (iii) for all $i\in N_{2}$, $%
\pi \left( i\right) \in S_{1}\subseteq N_{1}$ and $U_{i}=H_{\pi \left(
i\right) }=yxR_{\pi \left[ \pi \left( i\right) \right]
}|_{B}=yxR_{i}|_{B}=R_{i}$; and (iv) for all $i\in N_{3}$, $\pi \left(
i\right) =i$ and $U_{i}=H_{i}=\left( xy\right) R_{i}|_{B}=R_{i}$. By TS
(particularly TA), $C\left( U\right) =C\left( H\right) =\left\{ x\right\} $%
.\medskip

Finally, let $V$ be the profile obtained from $U$ by only reversing the
relative ranking of $x$ and $y$ for each player in $T_{1}$. Then $%
V_{i}=U_{i}=R_{i}$ for all $i\in S_{1}\cup N_{2}\cup N_{3}$ and $%
V_{i}=xyR_{i}|_{B}=R_{i}$ for $i\in T_{1}$. Hence $V=R^{\left\{ x,y\right\}
} $. Moreover $U\trianglerighteq ^{x,y}V=R$ and $x\in C\left( U\right)
=C\left( U^{\left\{ x,y\right\} }\right) $. Thus by TM, $x\in C\left(
V^{\left\{ x,y\right\} }\right) =C\left( R^{\left\{ x,y\right\} }\right) $.
That is a contradiction since $x\notin C\left( R^{\left\{ x,y\right\}
}\right) $.\medskip

In conclusion, $x\in C\left( R\right) $.
\end{proof}

\bigskip The following remark is important to ease the proof of the next
theorem.

\begin{remark}
\label{remark CW}Given a profile $R$ and two distinct alternatives $x$ and $%
y $ and for all profile $R$, by the definition of $R^{\left\{ x,y\right\} }$%
, only three possible cases may occur: $CW\left( R^{\left\{ x,y\right\}
}\right) =\left\{ x\right\} ,CW\left( R^{\left\{ x,y\right\} }\right)
=\left\{ y\right\} $ or $CW\left( R^{\left\{ x,y\right\} }\right) =\left\{
x,y\right\} $. Then $CW\left( R^{\left\{ x,y\right\} }\right) $ is always a
nonempty set. Therefore, if $C$ is a Condorcet-consistent SDR, then $C\left(
R^{\left\{ x,y\right\} }\right) =CW\left( R^{\left\{ x,y\right\} }\right) $.
\end{remark}

\begin{theorem}
\label{Theorem result}An SDR $C:W^{N}\rightrightarrows A$ is
Condorcet-consistent if and only if $C$ satisfies TC, TS, TM and TR.
\end{theorem}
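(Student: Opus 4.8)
The plan is to prove the two implications separately; the forward one is mostly bookkeeping while the converse carries the weight.

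\emph{Only if.} Assume $C$ is Condorcet-consistent. Then $C$ satisfies TC by the proposition already proved above (every Condorcet-consistent SDR satisfies TC), so only TS, TM and TR remain. The unifying observation is that whenever $P$ is a $2$-profile with top pair $\{x,y\}$ — in particular whenever $P=S^{\{x,y\}}$ is a top-shift profile — every alternative outside $\{x,y\}$ loses its majority duel with $x$, so $CW(P)$ is a nonempty subset of $\{x,y\}$ and $C(P)=CW(P)$. Since the counts $n(a,b,\cdot)$ are invariant under permuting voters and transform equivariantly under relabeling alternatives, $CW$ — hence $C$ on $2$-profiles — is anonymous and neutral, which gives TS; and $C(S^{\{x,y\}})\cap\{x,y\}=CW(S^{\{x,y\}})\neq\emptyset$ gives TR. For TM, from $x\in C(S^{\{x,y\}})=CW(S^{\{x,y\}})$ I would read off $n(x,y,S)\ge n(y,x,S)$; clause (i) of $S\vartriangleright^{x,z}T$ forces $n(x,y,T)\ge n(x,y,S)$ and $n(y,x,T)\le n(y,x,S)$, so $x\in CW(T^{\{x,y\}})=C(T^{\{x,y\}})$, while $z\notin C(T^{\{x,y\}})$ is automatic for $z\notin\{x,y\}$ and, for $z=y$, clause (ii) exhibits a voter who moves $x$ strictly above $y$, making the first inequality strict so that $CW(T^{\{x,y\}})=\{x\}$.

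\emph{If.} Assume $C$ satisfies TC, TS, TM and TR; I must show $C(R)=CW(R)$ whenever $CW(R)\neq\emptyset$. Since TC implies WTC, Proposition~\ref{prop CW included} applies and gives $CW(R')\subseteq C(R')$ for every profile $R'$. The heart of the proof is the identity $\mathcal{N}_C(R)=CW(R)$ for every $R$: granting it, $CW(R)\neq\emptyset$ forces $\mathcal{N}_C(R)\neq\emptyset$, and TC then gives $C(R)=\mathcal{N}_C(R)=CW(R)$. The inclusion $CW(R)\subseteq\mathcal{N}_C(R)$ is quick: if $x\in CW(R)$ then $n(x,y,R)\ge n(y,x,R)$ for all $y$, so $x\in CW(R^{\{x,y\}})\subseteq C(R^{\{x,y\}})$ for every $y\neq x$, i.e. $x\in\mathcal{N}_C(R)$.

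The reverse inclusion $\mathcal{N}_C(R)\subseteq CW(R)$ is where the real difficulty lies, and it reduces to the following lemma, which I expect to be the main obstacle: \emph{if $P$ is a $2$-profile with top pair $\{x,y\}$ and $n(y,x,P)>n(x,y,P)$, then $C(P)=\{y\}$.} Granting the lemma, if some $x\in\mathcal{N}_C(R)$ were not a Condorcet winner, pick $y$ with $n(y,x,R)>n(x,y,R)$; then $y$ strictly beats $x$ in the $2$-profile $R^{\{x,y\}}$, so $x\notin C(R^{\{x,y\}})$, contradicting $x\in\mathcal{N}_C(R)$. To prove the lemma I would first note that $C(P)\subseteq\{x,y\}$ by Proposition~\ref{prop TR TM to xy} and $y\in C(P)$ by Proposition~\ref{prop CW included} (since $CW(P)=\{y\}$), so it suffices to rule out $x\in C(P)$. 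Suppose $x\in C(P)$, hence $C(P)=\{x,y\}$, and let $(p,q,r)$ be the numbers of voters with $x\succ y$, with $y\succ x$, and with $x\sim y$, so $q>p$. Flipping one $y\succ x$ voter to $x\succ y$ is an additional support of $x$ against $y$, and TM (applied with $x\in C(P^{\{x,y\}})=C(P)$) forces $C$ of the resulting $2$-profile, of type $(p+1,q-1,r)$, to be $\{x\}$; if $q\ge p+2$ this contradicts Proposition~\ref{prop CW included}, because $y$ still ties or beats $x$ there. The only surviving case is $q=p+1$, which I would settle by applying TN to transpose $x$ and $y$ in $P$ and then TM along an additional support of $y$ against $x$, obtaining a $2$-profile of type $(p,p+1,r)$ whose $C$-value is $\{y\}$; since Proposition~\ref{prop TR TM to =12} together with TA shows that $C$ on $2$-profiles with top pair $\{x,y\}$ depends only on the triple $(p,q,r)$, this equals $C(P)=\{x,y\}$, a contradiction. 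The delicate points are thus this last one-vote-majority case and the auxiliary fact that $C$ on $2$-profiles is determined by $(p,q,r)$.
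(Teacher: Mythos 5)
Your proof is correct, and it rests on the same auxiliary results as the paper (Propositions \ref{prop TR TM to xy}, \ref{prop TR TM to =12} and \ref{prop CW included}), but the sufficiency half is organized differently. The paper, after deducing $C(R^{\{c,x\}})=\{c,x\}$ for an alleged non-Condorcet winner $x\in C(R)$ and an alternative $c$ that strictly beats it, flips the whole excess block $T_{2}$ of $c$-supporters at once and runs an explicit construction (voter permutation $\pi$, swap $\sigma$ of $c$ and $x$, intermediate profiles $Q,H,U$) to get $C(U)=\{c,x\}$, contradicting TM. You instead isolate a clean lemma --- in a $2$-profile where $y$ strictly beats $x$ the rule must select $\{y\}$ --- together with the ``type-dependence'' fact that, by Proposition \ref{prop TR TM to =12} and TA, the value of $C$ on $2$-profiles with top pair $\{x,y\}$ depends only on the counts $(p,q,r)$; the generic case $q\ge p+2$ then falls to a single TM flip plus Proposition \ref{prop CW included}, and only the one-vote-majority case $q=p+1$ needs the TN swap, which is exactly where the paper's symmetry construction reappears in miniature. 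This buys a more modular argument and a cheaper generic case, at the price of a boundary-case analysis. Your necessity direction is also slightly tidier than the paper's: you verify TR explicitly (the paper leaves it implicit), you obtain $z\notin C(Q^{\{x,y\}})$ for $z\notin\{x,y\}$ directly from $C=CW$ on top-shift profiles rather than via Proposition \ref{prop TR TM to xy} (whose hypotheses include the very axiom TM being established there), and in the case $z=y$ you correctly derive the strict majority for $x$ in $Q^{\{x,y\}}$ rather than in $R^{\{x,y\}}$.
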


\begin{proof}
Consider an SDR $C:W^{N}\rightrightarrows A$.

\textbf{Sufficiency:} assume that $C$ satisfies TC, TS, TM and TR. Consider
any profile in which $CW\left( R\right) \neq \emptyset $. By Proposition \ref%
{prop CW included}, $C\left( R\right) \supseteq CW\left( R\right) $. Then to
prove that $C\left( R\right) =CW\left( R\right) $, it is sufficient to show
that $C\left( R\right) \subseteq CW\left( R\right) $. Since $CW\left(
R\right) \neq \emptyset $, consider $a\in CW\left( R\right) $. For all $y\in
A\backslash \left\{ a\right\} $, $x\in CW\left( R^{\left\{ a,y\right\}
}\right) $ and by Proposition \ref{prop CW included}, $CW\left( R^{\left\{
a,y\right\} }\right) \subseteq C\left( R^{\left\{ a,y\right\} }\right) $.
Therefore for all $y\in A\backslash \left\{ a\right\} $, $x\in C\left(
R^{\left\{ a,y\right\} }\right) $. Hence $x\in \mathcal{N}_{C}(R)\neq
\emptyset $. Thus by TC, $C\left( R\right) =\mathcal{N}_{C}(R)=$ $z\in
A:\forall y\in A\backslash \{z\}$ $,z\in C(R^{\left\{ y,z\right\} })$%
.\medskip

Suppose that there exists $x\in C\left( R\right) $ such that $x\notin
CW\left( R\right) $.Therefore there exists $c\in A$ such that $\left\vert
N_{1}\right\vert <\left\vert N_{2}\right\vert $ where $N_{1}=\left\{ i\in
N,x\succ _{i}c\right\} $, $N_{2}=\left\{ i\in N,c\succ _{i}x\right\} $, $%
N_{3}=\left\{ i\in N,x\sim _{i}c\right\} $ and $N=N_{1}\cup N_{2}\cup N_{3}$%
. Note that $CW\left( R^{\left\{ c,x\right\} }\right) =\left\{ c\right\} $.
By Proposition \ref{prop CW included}, $c\in C\left( R^{\left\{ c,x\right\}
}\right) $. Since $x\in C\left( R\right) $, it follows that for all $y\in
A\backslash \left\{ x\right\} ,$ $x\in C\left( R^{\left\{ x,y\right\}
}\right) $. Hence $x\in C\left( R^{\left\{ c,x\right\} }\right) $. By
Proposition \ref{prop TR TM to xy}, $C\left( R^{\left\{ c,x\right\} }\right)
=\left\{ c,x\right\} $.\medskip

As in the proof of proposition \ref{prop CW included}, let $B=A\backslash
\left\{ c,x\right\} $ and $N_{2}=S_{2}\cup T_{2}$ such that $\left\vert
S_{2}\right\vert =\left\vert N_{1}\right\vert $ and $\left\vert
T_{2}\right\vert \geq 1$. Then $R_{i}^{\left\{ c,x\right\} }=xcR_{i}|_{B}$
if $i\in N_{1}$, $R_{i}^{\left\{ c,x\right\} }=cxR_{i}|_{B}$ if $i\in N_{2}$
and $R_{i}^{\left\{ c,x\right\} }=\left( xc\right) R_{i}|_{B}$ if $i\in
N_{3} $. Consider any one to one mapping $\nu $ from $N_{1}$ to $S_{2}$ and
define a permutation $\pi $ of $N$ as follows: $\pi \left( i\right) =\nu
\left( i\right) $ if $i\in N_{1}$, $\pi \left( i\right) =\nu ^{-1}\left(
i\right) $ if $i\in S_{2}$ and $\pi \left( i\right) =i$ if $i\in T_{2}\cup
N_{3}$. Clearly $\pi \left( N_{1}\right) =S_{2}$, $\pi \left( S_{2}\right)
=N_{1}$, $\pi \left( T_{2}\cup N_{3}\right) =T_{2}\cup N_{3}$ and $\pi
^{-1}=\pi $.\medskip

First consider the profile $Q$ defined by $Q_{i}=R_{i}|_{\left\{ c,x\right\}
}R_{\pi \left( i\right) }|_{B}$ for all $i\in N$. Note that $Q^{\left\{
c,x\right\} }=Q$ and that $R_{i}|_{\left\{ c,x\right\} }=Q_{i}|_{\left\{
c,x\right\} }$ for all $i\in N$, then by Proposition \ref{prop TR TM to =12}%
, $C\left( Q^{\left\{ c,x\right\} }\right) =C\left( R^{\left\{ c,x\right\}
}\right) =$ $\left\{ c,x\right\} $. Thus $C\left( Q\right) =\left\{
c,x\right\} $.\medskip

Now let $H$ be the profile obtained from $Q$ by only permuting $x$ and $c$.
That is $H=Q_{\sigma }$ where $\sigma $ is the permutation of $A$ defined by
$\sigma \left( x\right) =c$, $\sigma \left( x\right) =c$ and $\sigma \left(
z\right) =z$ for all $z\in A\backslash \left\{ c,x\right\} $. Then $%
H_{i}=cxR_{\pi \left( i\right) }|_{B}$ if $i\in N_{1}$, $H_{i}=xcR_{\pi
\left( i\right) }|_{B}$ if $i\in N_{2}$ and $H_{i}=\left( cx\right) R_{\pi
\left( i\right) }|_{B}$ if $i\in N_{3}$ By TS (particularly TN) , $C\left(
H\right) =C\left( \sigma \left( Q\right) \right) =\sigma \left( \left\{
c,x\right\} \right) =\left\{ c,x\right\} $. Also consider the profile $%
U=H_{\pi }$. Since $\pi ^{-1}=\pi $, it follows that: (i) for all $i\in N_{1}
$, $\pi \left( i\right) \in S_{2}$ and $U_{i}=H_{\pi \left( i\right)
}=xcR_{\pi \left[ \pi \left( i\right) \right] }|_{B}=xcR_{i}|_{B}=R_{i}$;
(ii) for all $i\in T_{2}$, $\pi \left( i\right) =i\in N_{2}$ and $%
U_{i}=H_{i}=xcR_{i}|_{B}$; (iii) for all $i\in S_{2}$, $\pi \left( i\right)
\in N_{1}$ and $U_{i}=H_{\pi \left( i\right) }=cxR_{\pi \left[ \pi \left(
i\right) \right] }|_{B}=cxR_{i}|_{B}=R_{i}$; and (iv) for all $i\in N_{3}$, $%
\pi \left( i\right) =i\in N_{3}$ and $U_{i}=H_{\pi \left( i\right) }=\left(
cx\right) R_{\pi \left[ \pi \left( i\right) \right] }|_{B}=\left( cx\right)
R_{i}|_{B}=R_{i}$. By TA, $C\left( U\right) =C\left( H\right) =\left\{
c,x\right\} $.\medskip

Finally, note that $U$ is exactly the profile obtained from $R^{\left\{
c,x\right\} }$ by only reversing the relative ranking of $c$ and $x$ for
each player in $T_{2}$. Since $x\in C\left( R^{\left\{ c,x\right\} }\right) $
and $R\trianglerighteq ^{x,c}U$, then by TM, $C\left( U\right) =\left\{
x\right\} $. This is a contradiction since $C\left( U\right) =\left\{
c,x\right\} $.\medskip

In conclusion, there exists no $x\in C\left( R\right) $ such that $x\notin
CW\left( R\right) $. Thus $C\left( R\right) \subseteq CW\left( R\right) $.

\medskip

\textbf{Necessity:} Assume that an SDR $C$ is Condorcet-consistent. Let
prove that $C$ satisfies TC, TS, TM and TR.

\begin{itemize}
\item Let us prove that $C$ satisfies TC: consider $R\in W^{N}$ and $x\in A$
such that $x\in \mathcal{N}_{C}\left( R\right) $, that is $x\in C\left(
R^{\left\{ x,y\right\} }\right) ,\forall y\in A\backslash \left\{ x\right\} $%
. Therefore by remark \ref{remark CW}, $x\in CW\left( R^{\left\{ x,y\right\}
}\right) ,\forall y\in A\backslash \left\{ x\right\} $ and this means $%
n\left( x,y,R^{\left\{ x,y\right\} }\right) \geq n\left( y,x,R^{\left\{
x,y\right\} }\right) $ $\forall y\in A\backslash \left\{ x\right\} $. This
implies $n\left( x,y,R\right) \geq n\left( y,x,R\right) $ $\forall y\in
A\backslash \left\{ x\right\} $ and therefore $x\in CW\left( R\right) $.
Then $CW\left( R\right) \neq \emptyset $ and since $C$ is
Condorcet-consistent, we have $CW\left( R\right) =C\left( R\right) $ and
therefore $x\in C\left( R\right) $.\medskip

We now prove that $C\left( R\right) =\mathcal{N}_{C}\left( R\right) $. It is
clear with what we just proved that $\mathcal{N}_{C}\left( R\right)
\subseteq C\left( R\right) $. Consider $a\in C\left( R\right) $, then $a\in
CW\left( R\right) $ and this means $n\left( a,b,R\right) \geq n\left(
b,a,R\right) $ $\forall b\in A\backslash \left\{ a\right\} $. It follows
that $a\in CW\left( R^{\left\{ a,b\right\} }\right) =C\left( R^{\left\{
a,b\right\} }\right) $ $\forall b\in A\backslash \left\{ a\right\} $. This
is $a\in \mathcal{N}_{C}\left( R\right) $.

\item Let us prove that $C$ satisfies TN: consider $R\in W^{N}$, $\sigma $ a
permutation on $A$ and $x,y\in A$:%
\begin{equation*}
\begin{array}{ll}
C\left( \sigma \left( R^{\left\{ x,y\right\} }\right) \right) & =C\left(
\sigma \left( R\right) ^{\left\{ \sigma \left( x\right) ,\sigma \left(
y\right) \right\} }\right) \\
& =CW\left( \sigma \left( R\right) ^{\left\{ \sigma \left( x\right) ,\sigma
\left( y\right) \right\} }\right) \\
& =\sigma \left[ CW\left( R^{\left\{ x,y\right\} }\right) \right] \\
& =\sigma \left[ C\left( R^{\left\{ x,y\right\} }\right) \right]%
\end{array}%
\end{equation*}

\item Let us prove that $C$ satisfies TA: consider $R\in W^{N}$, $\pi $ a
permutation on $N$ and $x,y\in A$:%
\begin{equation*}
\begin{array}{ll}
C\left( R_{\pi }^{\left\{ x,y\right\} }\right) & =C\left( R_{\pi \left(
N\right) }^{\left\{ x,y\right\} }\right) \\
& =CW\left( R_{\pi \left( N\right) }^{\left\{ x,y\right\} }\right) \\
& =CW\left( R^{\left\{ x,y\right\} }\right) \\
& =C\left( R^{\left\{ x,y\right\} }\right)%
\end{array}%
\end{equation*}

$C$ satisfies TA and TN, then $C$ satisfies TS.

\item Let us prove that $C$ satisfies TM: consider $R,Q\in W^{N},\left\{
x,y\right\} \subseteq A$ and $z\in A\backslash \left\{ x\right\} $ such that
$x\in C\left( R^{\left\{ x,y\right\} }\right) $ and $R\vartriangleright
^{x,z}Q$. $x\in C\left( R^{\left\{ x,y\right\} }\right) $ implies that $x\in
CW\left( R^{\left\{ x,y\right\} }\right) $ and therefore $x\in CW\left(
Q^{\left\{ x,y\right\} }\right) $ since $R\vartriangleright ^{x,z}Q$. Then $%
x\in C\left( Q^{\left\{ x,y\right\} }\right) $ since $C$ is
Condorcet-consistent. Let now prove that $z\notin C\left( Q^{\left\{
x,y\right\} }\right) $.

(i) If $z\neq y$, then by Proposition \ref{prop TR TM to xy}, $z\notin
C\left( Q^{\left\{ x,y\right\} }\right) $;

(ii) If $z=y$, then $x\in C\left( R^{\left\{ x,y\right\} }\right) $ and $%
R\vartriangleright ^{x,y}Q$ implies that $CW\left( R^{\left\{ x,y\right\}
}\right) =\left\{ x\right\} $. Hence $y\notin CW\left( R^{\left\{
x,y\right\} }\right) $ and then $y\notin C\left( R^{\left\{ x,y\right\}
}\right) $.
\end{itemize}
\end{proof}

\section{Independence of the axioms}

Theorem \ref{Theorem result} is a characterization of Condorcet-consistent
SDRs by means of four axioms. One may wonder whether these axioms are
minimal or not. As shown below, none of them can dropped.

\paragraph{TC can not be dropped:}

Define the SDR $C_{1}$ as follows:%
\begin{equation*}
\forall R\in W^{N},C_{1}\left( R\right) =\left\{
\begin{tabular}{ll}
$CW\left( R\right) $ & if $R\in W_{2}^{N}$ \\
&  \\
$A$ & otherwise%
\end{tabular}%
\right.
\end{equation*}

In one hand, it can be easily checked that $C_{1}$ satisfies TA,\ TN, TM and
TR but fails to be Condorcet-consistent. In the other hand, $C_{1}$ also
satisfies WTC. Then Theorem \ref{Theorem result} can not be restated by
substituting WTC to TC.

\paragraph{TS can not be dropped:}

As an SDR satisfies TS if it satisfies TA and TN, we therefore prove that
none of these two latter axioms can be dropped.

\subparagraph{TA can not be dropped:}

Given a profile $R$, $X$ a subset of $A$ and $i\in N=\left\{
1,2,...,n\right\} $, we set $top\left( R_{i}|_{X}\right) =\left\{ x\in
X/x\succeq _{R_{i}}y,\forall y\in X\right\} $ the set of all voter $i$'s
most preferred alternatives in $X$. Let%
\begin{equation*}
B_{1}\left( R\right) =top\left( R^{1}\right) \text{ and }\forall i\in
N\backslash \left\{ 1\right\} ,B_{i}\left( R\right) =top\left(
R_{i}|_{B_{i-1}\left( R\right) }\right) \text{.}
\end{equation*}%
Define the SDR $C_{2}$ as follows:%
\begin{equation*}
C_{2}\left( R\right) =B_{n}\left( R\right)
\end{equation*}

Note than $C_{2}$ can be viewed as a serial dictatorship for which voter $1$
first selects the set $B_{1}\left( R\right) $ of his/her best alternatives,
voter $2$ selects the set $B_{2}\left( R\right) $ of his/her best
alternatives from $B_{1}\left( R\right) $ and so on. It can be easily
checked that $C_{2}$ satisfies TC, TN, TM and TR but fails to be
Condorcet-consistent.

\subparagraph{TN can not be dropped:}

Consider $a\in A$ and define the SDR $C_{3}$ as follows:%
\begin{equation*}
\forall R\in W^{N},C_{3}\left( R\right) =\left\{
\begin{tabular}{ll}
$CW\left( R|_{A\backslash \left\{ a\right\} }\right) $ & if $CW\left(
R|_{A\backslash \left\{ a\right\} }\right) \neq \emptyset $ \\
&  \\
$A\backslash \left\{ a\right\} $ & otherwise%
\end{tabular}%
\right.
\end{equation*}

It can be easily checked that $C_{3}$ satisfies TC, TA, TM and TR but fails
to be Condorcet-consistent.

\paragraph{TM can not be dropped}

Define the SDR $C_{4}$ as follows:%
\begin{equation*}
\forall R\in W^{N},C_{4}\left( R\right) =\left\{ x\in A/\forall y\in
A,\exists i\in N,x\succ _{R_{i}}y\right\}
\end{equation*}

Note that $C_{4}\left( R\right) $ is the wellknown set of all Pareto optimal
alternatives in $R$. It can be easily checked that $C_{4}$ satisfies TC, TA,
TN and TR but fails to be Condorcet-consistent.

\paragraph{TR can not be dropped}

Define the SDR $C_{5}$ as follows:%
\begin{equation*}
\forall R\in W^{N},C_{5}\left( R\right) =\left\{
\begin{tabular}{ll}
$A\backslash \left\{ x,y\right\} $ & if $R=R^{\left\{ x,y\right\} }$ \\
&  \\
$A$ & otherwise%
\end{tabular}%
\right.
\end{equation*}

Note that $C_{5}\left( R\right) $ is the set of all alternatives ranked
first or second by at least one voter. It can be easily checked that $C_{5}$
satisfies TC, TA, TN and TM but fails to be Condorcet-consistent.

\section{Condorcet domain and Maskin monotonicity}

One main feature of the characterization result (Theorem \ref{Theorem result}%
) is the use of the top consistency axiom. This axiom, while compelling, may
be viewed as quite strong. As we now show, we also obtain an alternative
characterization of Condorcet-consistent rules weakening $TC$ to $WTC$ and
using the well-known Maskin monotonicity condition. Note that this
alternative characterization only applies to the Condorcet domain, i.e.
preference profiles which always admit a Condorcet winner.

\begin{definition}[\protect\cite{dasgupta1979implementation}, \protect\cite%
{maskin1999nash}]
An SDR $C$ is monotononic (MM) provided that $\forall x\in A$, $\forall
R,Q\in W^{N}$, if ($i$) $x\in C(R)$ and ($ii$) $\forall i\in N$, $\forall
y\in A$, $x\succ _{R_{i}}y\Longrightarrow x\succ _{Q_{i}}y$ and $x\sim
_{R_{i}}y\Longrightarrow x\succeq _{Q_{i}}y$, then $x\in C(Q)$.
\end{definition}

We let $W_{\ast }^{N}$ denote the set of profiles that admit a Condorcet
Winner. That is: $R\in W_{\ast }^{N}\Longleftrightarrow CW(R)\neq \emptyset $%
.

\begin{proposition}
\label{[MM2]}If an $SDR$ $C:W_{\ast }^{N}\rightrightarrows A$ is
Condorcet-consistent, then it satisfies $MM$.
\end{proposition}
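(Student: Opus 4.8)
The plan is to exploit the defining property of Condorcet-consistency on the restricted domain $W_{\ast}^{N}$, namely that $C(R)=CW(R)$ for every $R\in W_{\ast}^{N}$. So the whole statement reduces to showing that the correspondence $R\mapsto CW(R)$ is Maskin monotonic on $W_{\ast}^{N}$. Concretely, I would take $x\in A$ and $R,Q\in W_{\ast}^{N}$ with $x\in C(R)=CW(R)$ and with $Q$ obtained from $R$ by a monotonic transformation favoring $x$ (condition (ii) of the $MM$ definition: for every $i$ and every $y$, $x\succ_{R_i}y\Rightarrow x\succ_{Q_i}y$ and $x\sim_{R_i}y\Rightarrow x\succeq_{Q_i}y$). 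I need to conclude $x\in C(Q)$, and since $C$ is Condorcet-consistent and $Q\in W_{\ast}^{N}$, it suffices to show $x\in CW(Q)$.

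The key step is a counting argument on pairwise majority margins. Fix an arbitrary $y\neq x$. Since $x\in CW(R)$ we have $n(x,y,R)\geq n(y,x,R)$. Condition (ii) implies that for each voter $i$, the pairwise restriction $R_i|_{\{x,y\}}$ can only move in $x$'s favor: if $x\succ_{R_i}y$ then $x\succ_{Q_i}y$; if $x\sim_{R_i}y$ then $x\succeq_{Q_i}y$; and if $y\succ_{R_i}x$ then $Q_i|_{\{x,y\}}$ is unconstrained but in any case the number of voters preferring $y$ to $x$ cannot increase while the number preferring $x$ to $y$ cannot decrease. Making this precise: every $i$ with $x\succ_{R_i}y$ still has $x\succ_{Q_i}y$, so $n(x,y,Q)\geq n(x,y,R)$; and every $i$ with $y\succ_{Q_i}x$ must have had $y\succ_{R_i}x$ already (it could not have been $x\succ_{R_i}y$ or $x\sim_{R_i}y$, both of which force $x\succcurlyeq_{Q_i}y$), so $n(y,x,Q)\leq n(y,x,R)$. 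Chaining these, $n(x,y,Q)\geq n(x,y,R)\geq n(y,x,R)\geq n(y,x,Q)$. Since $y\neq x$ was arbitrary, $x\in CW(Q)$.

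Finally, because $Q\in W_{\ast}^{N}$ and $C$ is Condorcet-consistent, $C(Q)=CW(Q)\ni x$, which is exactly $x\in C(Q)$, establishing $MM$. The only mild subtlety — the part I would be most careful writing out — is the bookkeeping of the three cases ($x\succ_{R_i}y$, $x\sim_{R_i}y$, $y\succ_{R_i}x$) under condition (ii), to be sure that neither margin moves the wrong way and that the weak inequalities compose correctly. Beyond that the argument is entirely routine, and it is genuinely simpler than the corresponding facts about top-shift profiles used earlier in the paper, since here we work directly with $R$ and $Q$ rather than their $2$-profile modifications; no appeal to $TM$, $TR$, $TS$ or the nice set is needed.
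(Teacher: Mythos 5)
Your proof is correct and follows essentially the same route as the paper: show that under hypothesis (ii) the majority margins can only move in $x$'s favor (the paper writes this as $p_{1y}\geq m_{1y}\geq m_{3y}\geq p_{3y}$ for each $y\neq x$), conclude $x\in CW(Q)$, and invoke Condorcet-consistency on $W_{\ast}^{N}$ to get $x\in C(Q)$. Your case bookkeeping for why $n(y,x,Q)\leq n(y,x,R)$ is exactly the step the paper also spells out, so nothing further is needed.
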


\begin{proof}
Take any profile $R$ with $x\in C(R)$ and such that $\mathcal{CW}(R)\neq
\emptyset $. Assume that $C$ is Condorcet-consistent; therefore, it must be
the case that $x$ is a Condorcet winner. In other words, we can write that: $%
n\left( x,y,R\right) \geq n\left( y,x,R\right) ,\forall y\neq x$.

We let $m_{1y}=\#\{i\in N:x\succ _{R_{i}}y\}$, $m_{2y}=\#\{i\in N:x\sim
_{R_{i}}y\}$ and $m_{3y}=\#\{i\in N:y\succ _{R_{i}}x\}$ for each $y\neq x$.
Since $x$ is a $CW$, it follows that $m_{1y}\geq m_{3y}$ for each $y\neq x$.

Consider now any profile $Q$ with $\forall i\in N$, $\forall y\in A$, $%
x\succ _{R_{i}}y\Longrightarrow x\succ _{Q_{i}}y$ and $x\sim
_{R_{i}}y\Longrightarrow x\succeq _{Q_{i}}y$. If we can we prove that $x\in
C(Q)$, we have finished the proof.

Again, we let $p_{1y}=\#\{i\in N:x\succ _{Q_{i}}y\}$, $p_{2y}=\#\{i\in
N:x\sim _{Q_{i}}y\}$ and $p_{3y}=\#\{i\in N:y\succ _{Q_{i}}x\}$ for any $%
y\neq x$.

Since $x\succ _{R_{i}}y\Longrightarrow x\succ _{Q_{i}}y$ , it must be the
case $p_{1y}\geq m_{1y}$. Moreover, we have assumed that $x\sim
_{R_{i}}y\Longrightarrow x\succeq _{Q_{i}}y$ so that no voter with $x\sim
_{R_{i}}y$ is such that $y\sim _{Q_{i}}x$. Finally, the voters with $%
y\succeq _{R_{i}}x$ need not be such that $y\succeq _{Q_{i}}x$. It follows
that $p_{3y}\leq m_{3y}$ for any $y\neq x$.

Combining the previous inequalities, we can write that:
\begin{equation*}
p_{1y}\geq m_{1y}\geq m_{3y}\geq p_{3y}.
\end{equation*}

The previous inequality implies that $x$ is Condorcet Winner in the
preference profile $Q$. Hence since $C$ is Condorcet-consistent, $x\in C(Q)$%
, which finishes the proof.
\end{proof}

This result is surprising since Condorcet-consistent rules do not satisfy
Maskin monotonicity in the unrestricted domain. The next result formalizes
this intuition.

\begin{proposition}
\label{CC fails MM}For $n\geq 3,n\neq 4$ and $m\geq 3$, any
Condorcet-consistent SDR $C:W^{N}\rightrightarrows A$ fails to satisfy MM.
\end{proposition}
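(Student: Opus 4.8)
The statement to prove is Proposition \ref{CC fails MM}: for $n \ge 3$, $n \ne 4$, and $m \ge 3$, every Condorcet-consistent SDR $C:W^N \rightrightarrows A$ fails $MM$. The strategy is to exhibit, for each such $(n,m)$, two explicit profiles $R$ and $Q$ and an alternative $x$ witnessing a violation: $x \in C(R)$, the pairwise move from $R$ to $Q$ only (weakly) improves $x$ in the sense required by $MM$, yet $x \notin C(Q)$ because in $Q$ a \emph{different} alternative becomes the unique Condorcet winner, so Condorcet-consistency forces $C(Q)$ to exclude $x$. The cleanest way to do this is to arrange $R$ so that it admits a Condorcet winner $x$ (hence $C(R)=CW(R) \ni x$), and then ``push $x$ up'' in a way that paradoxically destroys the cycle among the other alternatives in a manner that makes another candidate Condorcet-dominant. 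This is a standard Condorcet-paradox style construction, and the parity/size hypotheses ($n \ne 4$, and handling $n=3$ versus $n$ odd versus $n$ even $\ge 6$ separately) reflect the need to build a suitable majority cycle on three alternatives.

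\medskip
\noindent\textbf{Key steps, in order.} First I would fix three distinguished alternatives $a,b,c$ (using $m \ge 3$) and place every other alternative at the bottom of every voter's ranking throughout, so that only the restriction to $\{a,b,c\}$ matters; this reduces everything to the three-alternative case. Second, I would construct $R$ so that $a = x$ is a (weak) Condorcet winner — typically by having $a$ tie or beat $b$ and $c$ — while $b$ and $c$ are arranged with a slight majority for, say, $b$ over $c$; then Condorcet-consistency gives $a \in C(R)$. Third, I would define $Q$ from $R$ by having a carefully chosen set of voters raise $a$ (and only raise $a$, never lowering it), in such a way that the relative standings among $\{b,c\}$ and between $a$ and $b$, $a$ and $c$ shift so that now $c$ (or $b$) strictly beats everyone, i.e.\ $CW(Q) = \{c\} \not\ni a$; by Condorcet-consistency $C(Q) = \{c\}$, so $a \notin C(Q)$. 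The crucial point is that the $MM$ hypothesis ``$x \succ_{R_i} y \Rightarrow x \succ_{Q_i} y$ and $x \sim_{R_i} y \Rightarrow x \succcurlyeq_{Q_i} y$'' constrains only the position of $x=a$; it says nothing about how $b$ and $c$ move relative to each other. So I would exploit exactly that freedom: when a voter is forced/allowed to lift $a$ from (say) bottom to top, the relative order of $b$ and $c$ beneath $a$ can be reversed for that voter, and a judicious choice of which voters do this converts a $b$-over-$c$ majority into a $c$-over-$b$ majority while simultaneously eroding $a$'s pairwise ties into losses against $c$. Finally I would split into cases on $n$: $n=3$ as a small base case with an explicit three-voter Condorcet cycle perturbation; $n$ odd, $n \ge 5$; and $n$ even, $n \ge 6$ — in the even case one typically writes $n = 2p$ and uses two equal blocks, and the exclusion $n=4$ is because $p=2$ is too small to simultaneously maintain $a$'s winner status in $R$ and flip the two majorities in $Q$.

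\medskip
\noindent\textbf{Main obstacle.} The delicate part is engineering a single move that does two things at once: it must be a legal $MM$-move (monotone in $x$ only), and it must simultaneously (i) strip $x$ of its Condorcet-winner status against some $c$ and (ii) promote $c$ to unique Condorcet winner — which requires $c$ to also beat $b$ after the move. Arranging (ii) without accidentally keeping $x$ a co-winner (which would be consistent with Condorcet-consistency selecting $x$, and so no violation) is where the vote-counting is tight, and it is precisely this tightness that forces the case split on the parity of $n$ and the exclusion of the smallest even value $n=4$. I would expect the bulk of the write-up to be verifying the three or four inequality counts in each case; the conceptual content is entirely in choosing $R$, choosing which voters lift $x$, and choosing the new relative order of $b,c$ for those voters.
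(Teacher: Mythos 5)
Your construction cannot work, and the obstacle is not the ``tight vote-counting'' you anticipate but a structural impossibility. You propose to start from a profile $R$ in which $x=a$ is a (weak) Condorcet winner, so that Condorcet-consistency pins $a\in C(R)$, and then to pass to $Q$ by an $MM$-admissible lift of $a$ that turns $a$'s ties against $c$ into losses and makes $c$ the unique Condorcet winner. But the $MM$ hypothesis for $x=a$ says precisely that $a\succ_{R_{i}}y\Rightarrow a\succ_{Q_{i}}y$ and $a\sim_{R_{i}}y\Rightarrow a\succcurlyeq_{Q_{i}}y$; hence $\{i:y\succ_{Q_{i}}a\}\subseteq\{i:y\succ_{R_{i}}a\}$ and $\{i:a\succ_{R_{i}}y\}\subseteq\{i:a\succ_{Q_{i}}y\}$, so for every $y$ one has $n\left( a,y,Q\right) \geq n\left( a,y,R\right) \geq n\left( y,a,R\right) \geq n\left( y,a,Q\right)$. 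Thus $a$ remains a Condorcet winner in $Q$, $C(Q)=CW(Q)\ni a$, and no violation of $MM$ is produced; in particular an individual tie $a\sim_{R_{i}}c$ can never become $c\succ_{Q_{i}}a$, so ``eroding $a$'s ties into losses against $c$'' is exactly what $MM$ forbids. This is the content of Proposition \ref{[MM2]} of the paper: on profiles admitting a Condorcet winner, Condorcet-consistency \emph{implies} the $MM$ conclusion, so no counterexample can start from a profile where the chosen $x$ is a Condorcet winner.

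The paper's proof goes the other way around, and this is the idea your plan is missing. Take $R$ to be a Condorcet-cycle profile on three alternatives $a,b,c$ (all other alternatives parked at the bottom of every ranking), built according to $n\bmod 3$; then $CW(R)=\emptyset$, so Condorcet-consistency puts no constraint on $C(R)$ beyond nonemptiness, and one may assume (say) $a\in C(R)$. Now define $Q$ by reversing only the relative ranking of $b$ and $c$ for the voters who rank $a$ third: the position of $a$ relative to every alternative is left untouched in every ranking, so the $MM$ hypothesis for $a$ holds trivially, yet the cycle is broken and $CW(Q)=\{c\}$, whence $C(Q)=\{c\}$ and $a\notin C(Q)$, contradicting $MM$. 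The freedom you correctly identified --- that $MM$ says nothing about how $b$ and $c$ move relative to each other --- is indeed the lever, but it must be used on a profile with an empty Condorcet-winner set, not to demote a standing Condorcet winner. The case split on $n$ (and the exclusion $n=4$, where the padded near-cycle profile would in fact admit a weak Condorcet winner) serves only to guarantee the existence of the cycle profile, not to balance the flips you describe.
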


\begin{proof}
Assume that $m\geq 3$. Let $C$ be a Condorcet-consistent SDR. Assume that $C$
satisfies MM.

In what follows: (i) $a$, $b$ and $c$ are distinct alternatives and $%
B=A\backslash \left\{ a,b,c\right\} $; (ii) $l$ is a given linear order on $%
B $; and (iii) $xyzl$ corresponds to the linear order in which $x$ is ranked
first, $y$ is second, $z$ is third and alternatives in $B$ are ranked
according to $l$, in this case, note that $\left\{ a,b,c\right\} =\left\{
x,y,z\right\} $.

Consider the profile $R$ define as follow for each case:

\begin{equation*}
\begin{tabular}{|c|c|c|}
\hline
& Number of voters & voter preferences \\ \hline
& $p$ & $abcl$ \\ \cline{2-3}
$n=3p,p\geq 1$ & $p$ & $bcal$ \\ \cline{2-3}
& $p$ & $cabl$ \\ \hline
& $p$ & $abcl$ \\ \cline{2-3}
$n=3p+1,p\geq 2$ & $p$ & $bcal$ \\ \cline{2-3}
& $p$ & $cabl$ \\ \cline{2-3}
& $1$ & $abcl$ \\ \hline
& $p$ & $abcl$ \\ \cline{2-3}
& $p$ & $bcal$ \\ \cline{2-3}
$n=3p+2,p\geq 1$ & $p$ & $cabl$ \\ \cline{2-3}
& $1$ & $abcl$ \\ \cline{2-3}
& $1$ & $cbal$ \\ \hline
\end{tabular}%
\end{equation*}

Assume that $a\in C\left( R\right) $ and consider the profile $Q$ obtained
from $R$ by reversing the relative rankings of $b$ and $c$ in the
preferences of all voters who ranked $a$ at the third position. Then $%
CW\left( Q\right) =\left\{ c\right\} $. Since $C$ is Condorcet-consistent,
then $C\left( Q\right) =\left\{ c\right\} $ and $a\notin C\left( Q\right) $.
But note that this is a contradiction since from $R$ to $Q$, the relative
ranking of $a$ with any other alternatives is preserved and $C$ is MM.

The same raisonning is valid when one assumes that $b\in C\left( R\right) $
or $c\in C\left( R\right) $.
\end{proof}

\begin{proposition}
\label{WTC+MM donne TC}If an SDR satisfies WTC and MM, then it satisfies TC.
\end{proposition}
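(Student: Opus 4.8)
The plan is to show that under $WTC$ and $MM$, the winning set of any profile with a nonempty nice set coincides with that nice set. Since $TC$ additionally requires the inclusion $\mathcal{N}_C(R)\subseteq C(R)$ which is exactly the content of $WTC$, the real work is the reverse inclusion $C(R)\subseteq\mathcal{N}_C(R)$ whenever $\mathcal{N}_C(R)\neq\emptyset$. So fix $R\in W^N$ with $\mathcal{N}_C(R)\neq\emptyset$, pick $x\in C(R)$, and the goal is to prove that for every $y\in A\setminus\{x\}$ we have $x\in C(R^{\{x,y\}})$.

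First I would exploit $WTC$ on the top-shift profile itself. Note that $R^{\{x,y\}}$ is already a $2$-profile, and one should check (as in Remark~\ref{remark CW}) that its nice set is $\{x,y\}$ if neither beats the other in the pairwise duel, and is a singleton otherwise; in any case $\mathcal{N}_C(R^{\{x,y\}})\neq\emptyset$, so $WTC$ forces the nice set to be contained in $C(R^{\{x,y\}})$. The crucial observation is then: if $x\notin C(R^{\{x,y\}})$, then $x\notin\mathcal{N}_C(R^{\{x,y\}})$, which by the structure of top-shift profiles means $y$ strictly beats $x$ in a pairwise majority vote at $R$ (equivalently at $R^{\{x,y\}}$, since top-shifting preserves the $\{x,y\}$ restriction of every voter).

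Now I would bring in $x\in C(R)$ and use $MM$. The idea is to build from $R$ a profile $R'$ in which $x$ is ranked first by every voter — I raise $x$ to the top of each voter's order without touching the relative ranking of any other pair, and in particular without ever lowering $x$ against anything. By $MM$, $x\in C(R')$. But $R'$ is a $2$-profile (indeed $R'=R'^{\{x,y\}}$ for the obvious reason, and also $R'=R'^{\{x,z\}}$ for any $z$), so $\mathcal{N}_C(R')$ is nonempty — in fact $x\in\mathcal{N}_C(R')$ since top-shifting $x$ with anything in $R'$ leaves $R'$ unchanged — and $WTC$ gives no new information here directly; instead the point is to compare $R'$ with $R^{\{x,y\}}$. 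Going from $R^{\{x,y\}}$ to $R'$, the relative ranking of $x$ versus every other alternative is weakly improved (in $R'$, $x$ is on top everywhere, and in $R^{\{x,y\}}$ the only alternative possibly above $x$ for some voter is $y$). So if $x\in C(R^{\{x,y\}})$ we would be done by $MM$; the obstacle is the converse direction, and this is where I expect the argument to be delicate.

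The main obstacle, as flagged, is ruling out the case where $y$ strictly beats $x$ at $R$ yet $x\in C(R)$. Here the strategy should be to contradict the hypothesis $\mathcal{N}_C(R)\neq\emptyset$: take any $w\in\mathcal{N}_C(R)$, so $w\in C(R^{\{w,z\}})$ for every $z$; combine this with $MM$ applied along a chain of profiles raising $w$ to the top, and with $WTC$, to force $w$ to be a Condorcet winner at $R$ — essentially the argument of Proposition~\ref{prop CW included} but using only $WTC$ and $MM$ in place of the full axiom list. If $w$ is a Condorcet winner at $R$, then no alternative strictly beats $w$; applying the same reasoning to $x\in C(R)$ via $WTC$ would pin down that any alternative $y$ with $x\notin C(R^{\{x,y\}})$ must strictly beat $x$, and one then has to derive a contradiction by comparing the pairwise standings of $x$ and $w$ and invoking $MM$ one more time to move $x$ above $w$. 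Assembling these pieces carefully — in particular getting the direction of each $MM$ deformation right and handling ties — is the technical heart of the proof; once $C(R)\subseteq\mathcal{N}_C(R)$ is established, $TC$ follows immediately from this together with $WTC$.
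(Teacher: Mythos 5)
There is a genuine gap, and it comes from overlooking the one observation that makes this proposition almost immediate. Once you fix $x\in C(R)\setminus\mathcal{N}_C(R)$ and find $y$ with $x\notin C(R^{\{x,y\}})$, you should apply $MM$ \emph{directly from $R$ to $R^{\{x,y\}}$}: top-shifting $x$ and $y$ places $x$ strictly above every $z\in A\setminus\{x,y\}$ and leaves the relative ranking of $x$ and $y$ untouched, so for every voter $i$ and every $z\neq x$ one has $x\succ_{R_i}z\Rightarrow x\succ_{R_i^{\{x,y\}}}z$ and $x\sim_{R_i}z\Rightarrow x\succeq_{R_i^{\{x,y\}}}z$. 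Since $x\in C(R)$, $MM$ then forces $x\in C(R^{\{x,y\}})$, a contradiction; together with $WTC$ this yields $C(R)=\mathcal{N}_C(R)$, and the proof is over. You instead compare $R^{\{x,y\}}$ with an auxiliary profile $R'$ where $x$ is everywhere on top, and you concede that this only helps ``if $x\in C(R^{\{x,y\}})$'' --- which is precisely the statement to be proved, so that branch of your argument is circular and never closes.

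The other steps you lean on are not available under the stated hypotheses. The claim that $x\notin\mathcal{N}_C(R^{\{x,y\}})$ forces $y$ to beat $x$ in a pairwise majority duel, and the plan to show that any $w\in\mathcal{N}_C(R)$ must be a Condorcet winner ``essentially as in Proposition~\ref{prop CW included} but using only $WTC$ and $MM$'', both presuppose a link between $C$ and majority comparisons that an arbitrary SDR satisfying only $WTC$ and $MM$ need not have: the nice set is defined through $C$ itself, not through majority duels, Remark~\ref{remark CW} concerns $CW$ (and Condorcet-consistent rules), and Proposition~\ref{prop CW included} uses TS, TM and TR, none of which are assumed here. So the ``technical heart'' you defer to cannot be carried out as described, whereas the intended argument needs no majority reasoning at all.
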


\begin{proof}
Consider $R\in W^{N}$ such that $\mathcal{N}_{C}(R)\neq \emptyset $. Assume,
by contradiction that $C\left( R\right) \neq \mathcal{N}_{C}(R)$. Since WTC
holds, it must be the case that $\mathcal{N}_{C}(R)\subseteq C(R)$. Take $%
x\in C(R)\backslash \mathcal{N}_{C}(R)$.

It follows that there exists some $y\in A$ such that $x\notin C\left(
R^{\left\{ x,y\right\} }\right) $. Indeed, if there is no such $y$, $x\in
\mathcal{N}_{C}(R)$ since $x\in C\left( R^{\left\{ x,z\right\} }\right) $
for any $z\neq x$.

However, one can check that by construction, $x>_{R_{i}}z$ implies $%
x>_{R_{i}^{\left\{ x,y\right\} }}z$ and $x\sim _{R_{i}}z$ implies $x\succeq
_{R_{i}^{\left\{ x,y\right\} }}z$ for any $i\in N$ and for any $z\neq x$.
Indeed, the profile $R^{\left\{ x,y\right\} }$ is obtained by moving
alternatives $x$ and $y$ to the top without altering their relative ranking
with respect to $R$. Therefore MM implies that $x\in C\left( R^{\left\{
x,y\right\} }\right) $, a contradiction.
\end{proof}

The reader can check that the Proposition \ref{WTC+MM donne TC} holds on the
unrestricted domain while Proposition \ref{[MM2]} is on our restricted
domain. Those previous results then lead to a new result of characterization
since they highlight that in the restricted domain, WTC and MM are
equivalent to TC.

\begin{theorem}
An SDR $C:W_{\ast }^{N}\rightrightarrows 2^{A}$ is Condorcet-consistent if
and only if it satisfies WTC, TS, TM, TR and MM.
\end{theorem}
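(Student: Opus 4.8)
The plan is to combine the already-established results of this section with Theorem~\ref{Theorem result}, exploiting the fact that on $W_{\ast}^{N}$ the Condorcet winner set is always nonempty. For the necessity direction, I would assume $C:W_{\ast}^{N}\rightrightarrows 2^{A}$ is Condorcet-consistent. Proposition~\ref{[MM2]} already gives that $C$ satisfies $MM$. For $TC$, $TS$, $TM$ and $TR$, I would re-run the ``Necessity'' part of the proof of Theorem~\ref{Theorem result}, checking that each argument there remains valid when the domain is restricted to $W_{\ast}^{N}$. This is essentially free for $TS$, $TM$, $TR$ because the key input there (Remark~\ref{remark CW}) concerns top-shift profiles $R^{\{x,y\}}$, which always have a nonempty Condorcet winner set and hence always lie in $W_{\ast}^{N}$; so the chains of equalities $C(\cdot)=CW(\cdot)$ go through verbatim. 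For $TC$, the argument of Proposition~1 (or the $TC$ bullet in the necessity proof of Theorem~\ref{Theorem result}) shows that if $x\in\mathcal N_C(R)$ then $x\in CW(R)$, so in particular $CW(R)\neq\emptyset$, i.e.\ $R\in W_{\ast}^{N}$ automatically whenever $\mathcal N_C(R)\neq\emptyset$; the remainder of that argument then applies unchanged. Since $TC$ implies $WTC$, necessity of $WTC$ follows as well.

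For the sufficiency direction, assume $C$ satisfies $WTC$, $TS$, $TM$, $TR$ and $MM$ on $W_{\ast}^{N}$. By Proposition~\ref{WTC+MM donne TC}, $WTC$ together with $MM$ yields $TC$; the remark after that proposition notes this implication holds on the unrestricted domain, so a fortiori on $W_{\ast}^{N}$. Thus $C$ satisfies $TC$, $TS$, $TM$ and $TR$ on $W_{\ast}^{N}$. Now I invoke Theorem~\ref{Theorem result}: its sufficiency direction shows that any SDR satisfying these four axioms is Condorcet-consistent, and crucially the proof of that direction only ever considers profiles $R$ with $CW(R)\neq\emptyset$ (it opens with ``Consider any profile in which $CW(R)\neq\emptyset$'') together with their top-shift profiles $R^{\{a,y\}}$, all of which also have nonempty Condorcet winner sets and hence lie in $W_{\ast}^{N}$. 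Therefore the entire argument can be read as a statement about $C:W_{\ast}^{N}\rightrightarrows 2^{A}$, and we conclude $C(R)=CW(R)$ for all $R\in W_{\ast}^{N}$, i.e.\ $C$ is Condorcet-consistent.

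The main obstacle — really the only thing requiring care — is verifying the domain bookkeeping: one must check that every profile appearing in the proofs of Proposition~\ref{prop TR TM to xy}, Proposition~\ref{prop TR TM to =12}, Proposition~\ref{prop CW included}, and the sufficiency part of Theorem~\ref{Theorem result} either is a top-shift profile (hence in $W_{\ast}^{N}$ by Remark~\ref{remark CW}) or is an intermediate profile obtained from one by permutations of voters/alternatives or by the operations $\vartriangleright^{x,z}$ used in those proofs. Permutations preserve membership in $W_{\ast}^{N}$ since $CW$ is equivariant, and the auxiliary profiles $Q,H,U,V$ constructed in those proofs are either $2$-profiles or differ from $2$-profiles by relabelings, so all stay inside $W_{\ast}^{N}$. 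Once this is confirmed, the theorem is an immediate corollary of Theorem~\ref{Theorem result}, Proposition~\ref{[MM2]} and Proposition~\ref{WTC+MM donne TC}.
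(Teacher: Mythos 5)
Your proof takes essentially the same route as the paper: necessity via Proposition~\ref{[MM2]} (for MM) together with the necessity part of Theorem~\ref{Theorem result} (for WTC, TS, TM, TR), and sufficiency via Proposition~\ref{WTC+MM donne TC} to recover TC and then the sufficiency part of Theorem~\ref{Theorem result}. The extra domain bookkeeping you carry out (checking that $C$ is only ever evaluated at top-shift or $2$-profiles, which lie in $W_{\ast}^{N}$) is a detail the paper leaves implicit, but it does not change the argument.
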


\begin{proof}
Consider an SDR $C:W_{\ast }^{N}\rightrightarrows 2^{A}$.

\textbf{Sufficiency:} assume that $C$ satisfies WTC, TS, TM, TR and MM.
Therefore, by Proposition \ref{WTC+MM donne TC}, $C$ satisfies TC since it
satisfies WTC and MM. Hence $C$ is Condorcet-consistent by Theorem \ref%
{Theorem result} since $C$ satisfies TC, TS, TM and TR.

\textbf{Necessity:} Assume that $C$ is Condorcet-consistent. In one hand, $C$
satisfies MM by Proposition \ref{[MM2]} and in the other hand, $C$ satisfies
WTC, TS, TM and TR. Hence $C$ satisfies WTC, TS, TM, TR and MM.
\end{proof}

\bigskip
\bibliographystyle{authordate1}
%\bibliography{acompat,bibliototal}
\bibliography{bibliototal.bib}

\end{document}